\documentclass[a4paper, 11pt]{article}

\usepackage{fullpage}

\usepackage{lineno,hyperref}
\modulolinenumbers[5]

\usepackage{amsmath}
\usepackage{amsthm}
\usepackage{amsfonts}
%
%

\usepackage{array}
\usepackage{multirow}
%

\usepackage{float}

\newtheorem{thm}{Theorem}
\newtheorem{lem}[thm]{Lemma}

\newtheorem{fact}[thm]{Fact}
\newtheorem{claim}[thm]{Claim}
\newtheorem{corollary}[thm]{Corollary}

\newcommand\floor[1]{\lfloor#1\rfloor}
\newcommand\ceil[1]{\lceil#1\rceil}

\usepackage{tcolorbox}
%



\usepackage{color,soul}
\usepackage{xcolor}

\usepackage{authblk}

\newcount\Comments  
\Comments=1
\definecolor{darkgreen}{rgb}{0,0.6,0}
\newcommand{\kibitz}[2]{\ifnum\Comments=1{\color{#1}{#2}}\fi}

\parskip 0mm

\usepackage{todonotes}

\sloppy











\bibliographystyle{elsarticle-num}


\title{An extension of the Moran process using \\ type-specific connection graphs}


\author[1]{Themistoklis Melissourgos}
%
\author[2,3]{Sotiris E. Nikoletseas}
%
\author[2,3]{Christoforos L. Raptopoulos}
%
\author[3,4]{Paul G. Spirakis}
%
\affil[1]{Group of Operations Research, TU Munich, Germany}
\affil[2]{Computer Technology Institute \& Press ``Diophantus'' (CTI), Patras, Greece}
\affil[3]{Computer Engineering \& Informatics Department, University of Patras, Greece}
\affil[4]{Department of Computer Science, University of Liverpool, United Kingdom}

\date{}

%
%

\begin{document}

\maketitle

		\begin{abstract}
			In the Moran process, as studied by Lieberman, Hauert and Nowak \cite{L05}, agents of a two-type population (i.e. mutants and residents) are associated with the nodes of a graph.  Initially, there is only one mutant in the population, occupying a node chosen uniformly at random, while all other individuals are residents. The reproductive power of each type is measured by its relative fitness: mutants have fitness $r > 0$, while residents have fitness $1$. In every step, an individual is chosen with probability proportional to its fitness, and its type (mutant or resident) is passed on to a neighbor which is chosen uniformly at random. It has been suggested that this stochastic process can model the spread of genetic mutations in populations. This work concerns a generalization of the model of Lieberman et al. \cite{L05}, which we introduced in \cite{MNRS18}. In particular, we assume that different types of individuals perceive the population through different graphs defined on the same vertex set, namely the resident graph and the mutant graph. We study structural and algorithmic aspects related to the fixation probability, which is the probability that eventually only mutants remain in the population.
			
			In the first part of this work, we show how we can transfer known results from the original single-graph model of \cite{L05} to our two-graphs model. In that direction, we provide a generalization of the Isothermal Theorem of \cite{L05}, thus providing sufficient conditions for a pair of graphs to have fixation probability equal to that of a pair of cliques; this corresponds to the absorption probability of a birth-death process with forward bias $r$. 
			
			In the second part, we give a 2-player strategic game view where each player corresponds to a different type and we associate fixation and/or extinction probabilities with player payoffs. In particular, pure strategies in this game are graphs and we give evidence that the clique is the most beneficial for both players, by proving bounds on the fixation probability when one of the two graphs is complete and the other graph belongs to various natural graph classes.

			
			Finally, we study the problem of efficiently approximating the fixation probability. We initially show that there is a pair of graphs for which the fixation probability is exponentially small. Therefore, the fixation probability in the general case of an arbitrary pair of graphs cannot be approximated via a method similar to \cite{D14}. Nevertheless, in the special case when the mutant graph is complete, we describe an FPRAS for the approximation of the fixation probability.
			
		\end{abstract}

	\section{Introduction}
	
	The Moran process \cite{M58} models the antagonism between two species whose critical difference in terms of adaptation is their \textit{relative fitness}. A \textit{resident} has relative fitness 1 and a \textit{mutant} relative fitness $r>0$. Many settings in Evolutionary Game Theory consider fitness as a measure of reproductive success; for examples see \cite{N06,HJSK,EDKJ}. A generalization of the Moran process by Lieberman et al \cite{L05} considered the situation where the replication of an individual's fitness depends on some given structure, i.e. a directed graph. This model gave rise to an extensive line of works in Computer Science with motivation from computer networks and social networks.
	%
	
	In this work we further extend the model of \cite{L05} to capture the situation where different species (types) perceive the population through different graphs defined on the same vertex set. In particular, the type (mutant or resident) of a node determines the outgoing edges incident to it. In other words, the type of the node determines the candidates (neighbours), one of which will inherit the node's type in case it reproduces. As in the single-graph case, the restrictions of the process imply that eventually only one species will remain in the population. Our setting is an interaction between two players (species) where each wants to maximize her probability of occupying the whole population.
	
	This strategic interaction is described by an 1-sum bimatrix game, where each player (resident or mutant) has all the strongly connected digraphs on $n$ nodes as her pure strategies. The resident's payoff is the \textit{extinction probability} (i.e. the probability that eventually only residents remain in the population) and the mutant's payoff is the \textit{fixation probability} (which is complementary to the extinction probability). A basic question that interests us is: what are the pure Nash equilibria of this game (if any)? To gain a better understanding of the behaviour of the competing graphs, we investigate the best responses of the resident to the clique graph of the mutant.
	
	This model and question is motivated by many interesting problems from various, seemingly unrelated scientific areas. Some of them are: idea/rumor spreading, where the probability of spreading depends on the kind of idea/rumor; computer networks, where the probability that a message/malware will cover a set of terminals depends on the message/malware; and also spread of mutations, where the probability of a mutation occupying the whole population of cells depends on the mutation. 
	
	\noindent \textbf{A motivating example.} Using the latter application as an analogue for the rest, we give the following example to elaborate on the natural meaning of this process. Imagine a population of identical somatic \textit{resident} cells (e.g. biological tissue) that carry out a specific function (e.g. an organ). The cells connect with each other in a certain way; i.e., when a cell reproduces it replaces another from a specified set of candidates, that is, the set of cells connected to it. Reproduction here is the replication of the genetic code to the descendant, i.e. the hardwired commands which determine how well the cell will adapt to its environment, what its chances of reproduction are and which candidate cells it will be able to reproduce onto. 
	
	The changes in the information carried by the genetic code, i.e. mutations, give or take away survival or reproductive abilities. A bad case of mutation is a cancer cell whose genes force it to reproduce relentlessly, whereas a good one could be a cell with enhanced functionality. A mutation can affect the cell's ability to adapt to the environment, which translates to chances of reproduction, or/and change the set of candidates in the population that should pay the price for its reproduction.
	
	Returning to our population of resident cells, suppose that, after lots of reproductions a mutant shows up due to replication mistakes, environmental conditions, etc. This \textit{mutant} has the ability to reproduce in a different rate, and also, to be connected with a set of cells different than the one of its resident version. For the sake of argument, we study the most pessimistic case, i.e. our mutant is an extremely aggressive type of cancer with increased reproduction rate and maximum unpredictability; it can replicate on any other cell and do that faster than a resident cell. We consider the following motivating question: Supposing this single mutant will appear at some point in time on a random cell equiprobably, what is the best structure (network) of our resident cells such that the probability of the mutant taking over the whole population is minimized?
	
	Numerous studies in Biology model the dynamics of cancer progression in somatic cells using the Moran process \cite{IMN04,Kom06,MIN04,KLVN02,NMI03}. A mutation that affects the aforementioned characteristics in a real population of somatic cells occurs rarely compared to the time it needs to conquer the population or get extinct. Therefore, a second mutation is extremely rare to happen before the first one has reached one of those two outcomes and this allows us to study only one type of mutant per process. In addition, apart from the different reproduction rate, a mutation can lead to a different ``expansionary policy'' of the cell, a property that to the authors' knowledge has not been studied so far.

	\section{Definitions}\label{definitions}
	Each of the population's individuals is represented by a label $i \in \{1,2,...,n\}$ and can have one of two possible types: $R$ (\textit{resident}) and $M$ (\textit{mutant}). We denote the \textit{set of nodes} by $V$, with $n=|V|$, and the \textit{set of resident (mutant) edges} by $E_R$ ($E_M$). The node connections are represented by directed edges; a node $i$ has a \textit{type R (M) directed edge} $(ij)_{R}$ ($(ij)_{M}$) towards node $j$ if and only if when $i$ is chosen and is of type $R$($M$) then it can reproduce onto $j$ with positive probability. The aforementioned components define two directed graphs; the \textit{resident graph} $G_{R} = (V,E_R)$ and the \textit{mutant graph} $G_{M} = (V,E_M)$. A node's type determines its fitness; residents have \textit{relative fitness} 1, while mutants have relative fitness $r>0$. 
	
	The process works as follows: we start with every node of the vertex set being resident, except for one node which is selected uniformly at random to be mutant. We consider discrete time, and in each time-step some individual $i$ is picked with probability proportional to its fitness, and copies its resident/mutant type onto an out-neighbour $j$ in the corresponding graph ($G_{R}$ or $G_{M}$, depending on whether $i$ is resident or mutant) with probability determined by the weight of the edge $(ij)$ (of the corresponding graph). Given that the resident (mutant) $i$ is chosen for reproduction, the probability of reproducing on $j$ is equal to some \textit{weight} $w_{ij}^{R}$ ($w_{ij}^{M}$), thus $\sum_{j=1}^{n}w_{ij}^{R}=\sum_{j=1}^{n}w_{ij}^{M}=1$ for every $i \in V$. For $G_{R}$, every edge $(ij)_{R}$ has weight $w_{ij}^{R}>0$ if $(ij)_{R} \in E_{R}$, and $w_{ij}^{R}=0$ otherwise. Similarly for $G_{M}$. For each graph we then define \textit{weight matrices} $W_R = \left[w_{ij}^{R}\right]$ and $W_M = \left[w_{ij}^{M}\right]$ which contain all the information of the two graphs' structure. An example is shown in Figure \ref{example1}. After each time-step three outcomes can occur: (i) a node is added to the \textit{mutant set} $S \subseteq V$, (ii) a node is deleted from $S$, or (iii) $S$ remains the same. If both graphs are strongly connected the process ends with probability 1 when either $S=\emptyset$ (\textit{extinction}) or $S=V$ (\textit{fixation}).
	\begin{figure}[h]
		\begin{center}
			\begin{minipage}[h]{.22\textwidth}
				\includegraphics[scale=0.7]{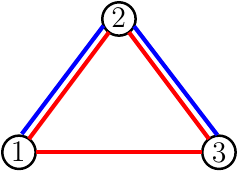}
			\end{minipage}
			\begin{minipage}[h]{.60\textwidth}
				\[
				\qquad W_R=
				\begin{bmatrix}
				0 					& 1 				 		& 0 							\\
				\frac{1}{2}	& 0							& \frac{1}{2}			\\
				0						& 1							& 0
				\end{bmatrix} \qquad
				W_M=
				\begin{bmatrix}
				0 					& \frac{1}{2} 	& \frac{1}{2} 		\\
				\frac{1}{2}	& 0							& \frac{1}{2}			\\
				\frac{1}{2}	& \frac{1}{2}		& 0
				\end{bmatrix}
				\]
			\end{minipage}
		\end{center}
		\caption{The two graphs combined; the edges of the resident graph are blue and the edges of the mutant graph are red. The respective weight matrices capture all the structure's information, including the weights to each edge. For example, the resident behaviour for node 1 (if chosen) is to reproduce only onto node 2, while its mutant behaviour is to reproduce equiprobably onto either 2 or 3.}\label{example1}
	\end{figure}

	We denote by $f(S)$ the probability of fixation given that we start with the mutant set $S$. We define the \textit{fixation probability} to be $f=\frac{1}{n}\sum_{u \in V}f\left(\{u\}\right)$ for a fixed relative fitness $r$. Notice the distinction between the terms ``probability of fixation given mutant set $S$'', and ``fixation probability''. We also define the \textit{extinction probability} to be equal to $1-f$. In the case of only one graph $G$ (i.e. $G_R=G_M=G$), which has been the standard setting so far in the literature, the point of reference for a graph's behaviour is the fixation probability of the complete graph (called \textit{the Moran fixation probability}) $f_{\text{Moran}}=\left(1-\frac{1}{r}\right)/\left(1-\frac{1}{r^n}\right)$. $G$ is an \textit{amplifier of selection} if $f>f_\text{Moran}$ and $r>1$ or $f<f_\text{Moran}$ and $r<1$ because it favors advantageous mutants and discourages disadvantageous ones. $G$ is a \textit{suppressor of selection} if $f<f_\text{Moran}$ and $r>1$ or $f>f_\text{Moran}$ and $r<1$ because it discourages advantageous mutants and favors disadvantageous ones.
	
	An \textit{unweighted graph} is a graph with the property that for every $i \in V$, $w_{ij}=\frac{1}{deg(i)}$ for every $j$ with incoming edge from $i$, where $deg(i)$ is the out-degree of node $i$. An undirected graph is a graph $G$ for which $(ij) \in E$ if and only if $(ji) \in E$, and therefore for any $i \in V$, the in-degree of $i$ equals its out-degree. In this work we will only deal with graphs that are undirected unweighted, and in the sequel we will abuse the term \textit{undirected graph} to refer to an undirected unweighted graph. Note that for an undirected graph, it could be $w_{ij} \neq w_{ji}$.
	

	Throughout this work we will use resident and mutant graphs from some well known graph families, namely: 
	\begin{itemize}
		\item the complete graph $K_n$,
		\item the undirected star graph $K_{1,n-1}$, which we denote by $S_n$ for simplicity,
		\item the circulant graph $Ci_{n}(1,2,\dots,d/2)$ for even $d$, which we denote by $CId$. Briefly this subclass of circulant graphs is defined as follows: consider the cycle graph, and add edges between all pairs of nodes at distance at most $d/2$ from each other.
	\end{itemize}
	We always assume that our graphs are on $n$ nodes, and we omit $n$ from the notation for simplicity when this is possible.

	\begin{figure}
		\begin{center}
			\includegraphics[scale=0.45]{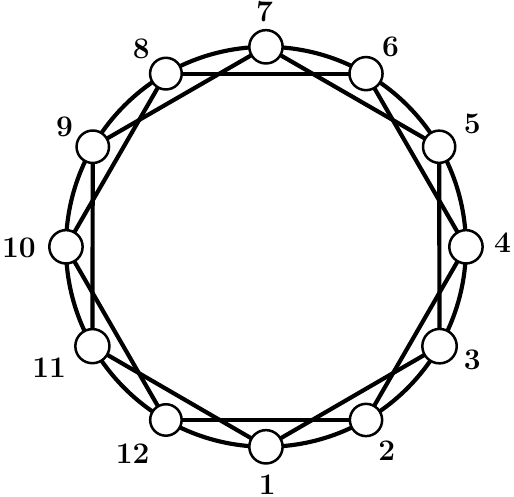}
			\includegraphics[scale=0.45]{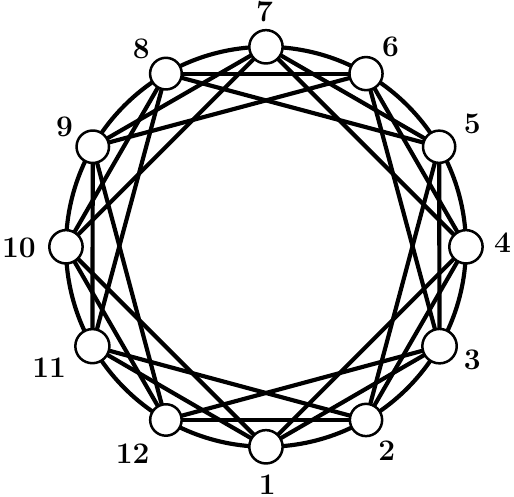}
			\includegraphics[scale=0.45]{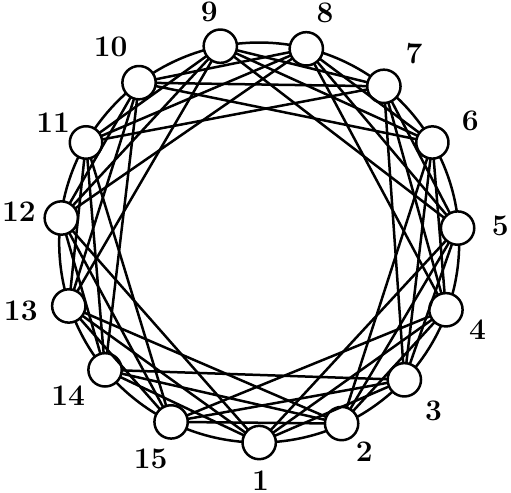}
			\caption{The classes of 4-regular, 6-regular and 8-regular undirected graphs $CI4$, $CI6$ and $CI8$. Here the number of nodes is $12, 12$ and $15$ respectively.}\label{CId}
		\end{center}
	\end{figure}
	By ``\textit{Resident Graph vs Mutant Graph}'' we refer to the process with $G_{R} =$ \textit{Resident Graph} and $G_{M} =$ \textit{Mutant Graph} and by $f(G_{R},G_{M} ; r)$ we refer to the fixation probability of that process with relative fitness $r$.
	Also, we denote by $f^{S}(G_R, G_M; r)$ the probability of fixation when our population has a set of mutants $S$ with relative fitness $r>0$, resident graph $G_R$ and mutant graph $G_M$. For presentation purposes we will omit $r$ as an argument of $f$ and $f^{S}$ whenever what is written is clear from context. Some of the arguments of the probability of fixation might be omitted when clear from context.  
	
	We note that in this work, we are interested in the asymptotic behavior of the fixation probability as the population size $n$ grows. Therefore, we employ the standard asymptotic notation with respect to $n$; in particular, $r$ is almost always treated as a constant. Furthermore, in the rest of this work, by $G_R$ and $G_M$ we mean graph families $\{(G_{R})_{n}\}_{n \geq 5}$ and $\{(G_{M})_{n}\}_{n \geq 5}$ respectively, and we might omit $n$ when we only care about the fixation probability as $n \to \infty$.

	\section{Our Results}
	
	In this work, we introduce and study a generalization of the model of \cite{L05} by assuming that different types of individuals perceive the population through different graphs defined on the same vertex set, namely $G_R = (V, E_R)$ for residents and $G_M = (V, E_M)$ for mutants. In this model, we study the fixation probability, i.e. the probability that eventually only mutants remain in the population, for various pairs of graphs.
	
	In particular, in Section \ref{abstraction} we initially prove a tight upper bound (Theorem \ref{thm:generalupper}) on the fixation probability for the general case of an arbitrary pair of digraphs. Next, we prove a generalization of the Isothermal Theorem of \cite{L05}, that provides sufficient conditions for a pair of graphs to have fixation probability equal to the fixation probability of a clique pair, namely $f_\text{Moran} \stackrel{def}{=} f(K_n , K_n) = \left(1-\frac{1}{r}\right)/\left(1-\frac{1}{r^n}\right)$, for $r \in \mathbb{R}_{>0} \setminus \{1\}$; this corresponds to the absorption probability of a simple birth-death process with forward bias $r$. It is worth noting that it is easy to find small counterexamples of pairs of graphs for which at least one of the two conditions of Theorem 2 does not hold and yet the fixation probability is equal to $f_{\text{Moran}}$; hence we do not prove necessity.
	
	In Section \ref{strategicgame} we give a 2-player strategic game view of the process where player payoffs correspond to fixation and/or extinction probabilities. In this setting, we give an extensive study of the fixation probability when one of the two underlying graphs is complete, providing several insightful results. In particular, we prove that, the fixation probability $f(S_n , K_n) \in 1 - O(1/n)$, and thus tends to 1 as the number of nodes grows, for any constant $r>0$. By using a translation result (Lemma \ref{mirror}), we can show that, when the two graphs are exchanged, we have $f(K_n , S_n) \to 0$. Moreover, using a direct proof, in Theorem \ref{lemma5} we show that in fact $f(K_n , S_n) \in O\left(\frac{r^{n-1}}{(n-2)!}\right)$, i.e. it is exponentially small in $n$, for any constant $r>0$. In Theorem \ref{LB}, we also provide a lower bound on the fixation probability in the special case where the resident graph is any undirected graph and the mutant graph is a clique.
	
	Furthermore, in Subsection \ref{classesvsCL}, we find bounds on the fixation probability when the mutant graph is the clique and the resident graph belongs to large class of regular graphs, known as \textit{circulant graphs}. In particular, we show that when the mutant graph is the clique and the resident graph is a circulant graph $CId$ of any degree $d$, then $1-\frac{1}{r} - o(1) \leq f(CId, K_n) \leq e^{-1/r} + o(1)$, for any constant $r>2$, where $o(1)$ denotes a function of $n$ that tends to 0 as $n$ tends to $\infty$. Since the sparser circulant graph is the cycle graph ($CI2$), the latter result in fact reveals that such a simple graph structure for the residents performs very well against the mutant clique (for the aforementioned range of $r$). In addition, by running simulations (which we do not analyse here) for the case where the resident graph is the strong suppressor in \cite{GIA16}, and the mutant graph is the clique, we get fixation probability significantly greater than $f_\text{Moran}$ for up to $336$ nodes and values of fitness $r>1$. All of our results seem to indicate that the clique is the most beneficial graph (in terms of player payoff in the game theoretic formulation). However, the proof of this conjecture remains an open problem.
	
	Finally, in Section \ref{sec:fpras} we consider the problem of efficiently approximating the fixation probability in our model. We point out that Theorem \ref{lemma5} implies that the fixation probability cannot be approximated via a method similar to \cite{D14}. However, when we restrict the mutant graph to be complete, we prove a polynomial (in $n$) upper bound for the absorption time of the generalized Moran process when $r > 2c \left( 1 + o(1)\right)$, where $c$ is the maximum ratio of degrees of adjacent nodes in the resident graph. The latter allows us to give a fully polynomial randomized approximation scheme (FPRAS) for the problem of computing the fixation probability in this case.

	\section{Previous Work}
	So far the literature consists of works that consider the same structure for both residents and mutants. This single-graph setting was initiated by P.A.P. Moran \cite{M58} where the case of the complete graph was examined. Many years later, the setting was extended to structured populations on general directed graphs by Lieberman et al. \cite{L05}. They introduced the notions of amplifiers and suppressors of selection, a categorization of graphs based on the comparison of their fixation probabilities with that of the complete graph. They also found a sufficient condition (in fact \cite{G16} corrects the claim in \cite{L05} that the condition is also necessary) for a digraph to have the fixation probability of the complete graph, but a necessary condition is yet to be found.
	
	Since the generalized single-graph model in \cite{L05} was proposed, a great number of papers have tried to answer some very intriguing questions in this framework. One of them is the following: which are the best unweighted amplifiers and suppressors that exist? D\'iaz et al. \cite{D14} give the following bounds on the fixation probability of strongly connected digraphs: an upper bound of $1-\frac{1}{r+n}$ for $r>0$, a lower bound of $\frac{1}{n}$ for $r>1$ and they show that there is no positive polynomial lower bound when $0<r<1$. An interesting problem that was set in \cite{L05} is whether there are graph families that are \textit{strong amplifiers} or \textit{strong suppressors} of selection, i.e. families of graphs with fixation probability tending to 1 or to 0 respectively as the order of the graph tends to infinity and for $r>1$. Galanis et al. \cite{G16} presented an infinite family of strongly-amplifying directed graphs, namely the ``megastar'' with fixation probability $1-O(n^{-1/2}\log^{23}n)$, which was later proved to be optimal up to logarithmic factors \cite{GL16}.
	
	While the search for optimal directed strong amplifiers was still on, a restricted version of the problem had been drawing a lot of attention: which are the tight bounds on the fixation probability of undirected graphs? The lower bound in the undirected case remained $\frac{1}{n}$, but the upper bound was significantly improved by Mertzios et al. \cite{M13} to $1-\Omega(n^{-3/4})$, when $r$ is independent of $n$. It was again improved by Giakkoupis \cite{GIA16} to $1-\Omega\left(r^{-5/3} n^{-1/3}\log^{-4/3} n\right)$, and finally by Goldberg et al. \cite{GL16} to $1-\Omega(n^{-1/3})$ where they also found a graph which shows that this is tight. While the general belief was that there are no undirected strong suppressors, Giakkoupis \cite{GIA16} showed that there is a class of graphs with fixation probability $O(r^{2}n^{-1/4}\log n)$ and Goldberg et al. \cite{GLR18} found a stronger suppressor with fixation probability $O(r^{2} n^{-1/2})$, opening the way for a potentially optimal strong suppressor to be discovered.
	
	Extensions of \cite{L05} where the interaction between individuals includes a bimatrix game have also been studied. Ohtsuki et al. in\cite{OPN} considered the generalized Moran process with two distinct graphs, where one of them determines possible pairs that will play a bimatrix game and yield a total payoff for each individual, and the other determines which individual will be replaced by the process in each step. Two similar settings, where a bimatrix game determines the individuals' fitness, were studied by Ibsen-Jensen et al. in\cite{IJCN}. In that paper they prove NP-hardness and \#P-hardness on the computation of the fixation probabilities for the aforementioned settings respectively, and they also prove PSPACE inclusion for both. Regarding the problem of computing the fixation probability in the single-graph setting, \cite{CIN17} and \cite{GLR18} provide fully polynomial, randomized time approximation scemes (FPRAS) which have significantly improved running time compared to the previous algorithm of \cite{D14}.

	\section{Markov Chain Abstraction and the Generalized Isothermal Theorem}\label{abstraction}
	
	The \textit{two-graphs} Moran process that we propose is a generalization of the single-graph process. Like the single-graph process, this is also an absorbing Markov chain \cite{N06} whose states are the possible mutant sets $X \subseteq V$ ($2^n$ different mutant sets) and there are two absorbing states, namely $\emptyset$ and $ V $. 
	In this setting, the fixation probability is the average absorption probability to $V$, starting from a state with one mutant equiprobably. Since our Markov chain contains only two absorbing states, the sum of the fixation and extinction probabilities is equal to 1.
	
	~\\

	\noindent \textbf{Transition probabilities and domination arguments.}
	In the sequel, we will denote by $X+y$ the set $X \cup \{y\}$ and by $X-y$ the set $X \setminus \{y\}$. We can easily deduce the boundary conditions from the definition: $f(\emptyset)=0$ and $f(V)=1$. For any other arbitrary state $X$ of the process we have:
	\begin{align}\label{fixation1}
	f(X)= \sum_{i \in X, j \notin X}\frac{r}{F(X)}w_{ij}^{M}\cdot f(X+j) + \sum_{j \notin X, i \in X}\frac{1}{F(X)}w_{ji}^{R}\cdot &f(X-i) + \nonumber \\ 
	+ \left(\sum_{i \in X, j \in X}\frac{r}{F(X)}w_{ij}^{M} + \sum_{i \notin X, j \notin X}\frac{1}{F(X)}w_{ij}^{R}\right)\cdot &f(X) ,
	\end{align}
	where  $F(X)=|X|r+|V|-|X|$  is the total fitness of the population in state $ X $. By eliminating self-loops, we get
	\begin{align}\label{fixat2}
	f(X)= \frac{r \cdot \sum_{i \in X, j \notin X} w_{ij}^{M}}{r \cdot \sum_{i \in X, j \notin X} w_{ij}^{M} + \sum_{j \notin X, i \in X}w_{ji}^{R}}\cdot f(X+j) + \frac{\sum_{j \notin X, i \in X}w_{ji}^{R}}{r \cdot \sum_{i \in X, j \notin X} w_{ij}^{M} + \sum_{j \notin X, i \in X}w_{ji}^{R}}\cdot f(X-i) . 
	\end{align}

	We should note here that, in the general case, the fixation probability can be computed by solving a system of $2^n$ linear equations using this latter relation. However, bounds are usually easier to be found and special cases of resident and mutant graphs may have efficient exact solutions.
	
	In the proofs of this work we shall use the below fact from Example 3.9.6 in \cite{GS01} and also the results that follow it:
	\begin{fact}\label{remark2}
		Consider the birth-death process with state space $\{0, 1, \ldots, n\}$ and absorbing states $0, n$. Let the transition probability from state $a$ to state $b$ be denoted by $p_{a}^b$ and the backward bias at state $k \in \{1,2, \dots, n-1\}$ be the quantity $\gamma_k := p_{k}^{k-1}/p_{k}^{k+1}$. Then the probability of absorption at $n$, given that we start at $i$ is 
		\begin{align*}
		f_i = \frac{1+\sum_{j=1}^{i-1}\prod_{k=1}^{j}\gamma_k}{1+\sum_{j=1}^{n-1}\prod_{k=1}^{j}\gamma_k} .
		\end{align*} 
	\end{fact}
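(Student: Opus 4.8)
The plan is to prove Fact \ref{remark2} as a standard first-step analysis of a one-dimensional birth-death chain with two absorbing barriers. Let $f_i$ denote the probability of absorption at $n$ starting from state $i$, and let $p_k$ and $q_k$ be the forward (up) and backward (down) transition probabilities at an interior state $k$, with $\gamma_k = q_k/p_k$ the backward bias. The boundary conditions are $f_0 = 0$ and $f_n = 1$. First I would write down the harmonic equation that $f_i$ satisfies at each interior state, namely $f_i = p_i f_{i+1} + q_i f_{i-1} + (1 - p_i - q_i) f_i$, which after eliminating the self-loop term rearranges to $p_i(f_{i+1} - f_i) = q_i(f_i - f_{i-1})$.

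The key step is to introduce the consecutive differences $d_i \stackrel{def}{=} f_i - f_{i-1}$ for $i = 1, \ldots, n$. The harmonic relation then telescopes into the simple recurrence $d_{i+1} = \gamma_i \, d_i$, so that $d_{j+1} = d_1 \prod_{k=1}^{j} \gamma_k$. Summing these differences from $1$ up to an arbitrary state recovers $f_i$ in terms of $d_1$, since the $f$-values telescope:
\begin{align*}
f_i = \sum_{j=1}^{i} d_j = d_1\left(1 + \sum_{j=1}^{i-1}\prod_{k=1}^{j}\gamma_k\right),
\end{align*}
using $f_0 = 0$ and the convention that the empty product equals $1$.

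Finally, I would pin down the unknown constant $d_1$ by invoking the other boundary condition $f_n = 1$. Setting $i = n$ in the displayed expression gives $1 = d_1\left(1 + \sum_{j=1}^{n-1}\prod_{k=1}^{j}\gamma_k\right)$, which solves for $d_1$ as the reciprocal of that bracketed sum. Substituting this value of $d_1$ back into the formula for $f_i$ yields exactly the claimed ratio, with the numerator depending on the starting state $i$ and the denominator being the normalizing sum over all interior states. I do not expect any genuine obstacle here; the only points requiring care are the bookkeeping of the empty-product and empty-sum conventions at the endpoints (ensuring the formula degenerates correctly to $f_0 = 0$ and $f_n = 1$) and verifying that the denominator is nonzero, which holds automatically since it is a sum of positive terms plus one.
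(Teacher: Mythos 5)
Your proof is correct. Note that the paper never proves this statement itself: it is presented as a Fact imported from \cite{N06} (ch.~6 of Nowak's book), so there is no in-paper argument to compare against. Your derivation --- first-step analysis at interior states, elimination of the self-loop to get $p_i(f_{i+1}-f_i)=q_i(f_i-f_{i-1})$, passing to the differences $d_i=f_i-f_{i-1}$ with $d_{i+1}=\gamma_i d_i$, telescoping, and fixing $d_1$ via $f_n=1$ --- is exactly the standard gambler's-ruin computation that the cited source relies on, so it correctly supplies the proof the paper omits by citation.
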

	
	For the next results which will be used to prove domination arguments among Moran processes throughout this work, we will need the following definitions. Consider the Moran process (discrete-time Markov chain) $(S(t))_{t\geq0}$ on two graphs $G_{R} = (V, E_R)$, $G_{M} = (V, E_M)$, where $|V| = n$, $S(t) \in 2^V$ and, by definition, its absorbing states are $\emptyset, V$. 
	
	The \textit{transition probability} from mutant set $X$ to mutant set $Y$ is $p_{X}^{Y} = \Pr \{ S(t+1)=Y \text{ } | \text{ } S(t)=X \}$. These probabilities define a $2^n \times 2^n$ \textit{transition matrix} $\mathbf{p}$ for which $\mathbf{p}_{X,Y} = p_{X}^Y$.\footnote{To enumerate the $2^n$ mutant sets, one simply can uniquely map each mutant set $X$ to its binary representation. E.g. by fixing an arbitrary order of the nodes, let the $i$-th bit correspond to node $i$ and give it value 1 if it is mutant in state $X$, and 0 otherwise. When we write $\mathbf{p}_{X,Y}$ we mean the element of $\mathbf{p}$ that is in the $X$-th (in binary representation) row and $Y$-th column.} 

\noindent	
	\textit{\underline{Notation remark:}}
		\textit{If for some $i \in \{1,2, \dots, n-1\}$ the transition probabilities $p_{X}^{Y}$ are the same for every $X,Y$ for which $|X| = i$, $|Y| = i+1$, we abuse the notation and denote all of them by $p_{i}^{i+1}$. Similarly, when $|X| = i$, $|Y| = i-1$ we denote all of these transition probabilities by $p_{i}^{i-1}$. 
	}



As it is apparent by the Markov chain abstraction, the fixation probability $f(X)$ starting from a mutant set $X$ is the absorption probability for the absorbing state $V$ \cite{N98}. This absorption probability can be calculated from the set of equations \eqref{fixation1}, or after eliminating self-loops, from the equations \eqref{fixat2}. For ease of presentation, let us focus on the latter, and also denote by $p_{X}^{X+j}$ and $p_{X}^{X-i}$ the transition probabilities which are coefficients of $f(X+j)$ and $f(X-i)$ in \eqref{fixat2}.
	
Let $M=(M(t))_{t \geq 0}$ be a Moran process, and for each $k' \in \{0,1,\dots, n\}$ consider the set $S_{k'} := \{S \in 2^{V} ~ | ~ |S|=k'\}$ consisting of states with $k'$ mutants. Note that $S_0 = \{\emptyset\}$ and $S_n = \{V\}$. 
For any $k \in \{1, 2, \dots, n-1\}$ and any state $X \in S_k$, let us define the \textit{cumulative backward bias} of $X$ in process $M$ as $\gamma^{c}(X;M) := \frac{\sum_{j \in S_{k-1}} p_{X}^j}{\sum_{j \in S_{k+1}} p_{X}^j}$. We will omit the initial state or/and process indicators when clear from context. We also define $S^{(k)}_{\max} := \arg \max_{X \in S_k} \gamma^{c}(X)$ and $S^{(k)}_{\min} := \arg \min_{X \in S_k} \gamma^{c}(X)$. Now, let $\gamma'_{k}, \gamma''_{k} > 0$ be such that $\gamma'_{k} \geq \max_{X \in S_k} \gamma^{c}(X)$ and $\gamma''_{k} \leq \min_{X \in S_k} \gamma^{c}(X)$. Also, let the Moran processes $M_{\max}$ and $M_{\min}$ be such that the only state with $k$ mutants is $S^{(k)}_{\max}$ and $S^{(k)}_{\min}$, respectively, for each $k \in \{1, 2, \dots, n-1\}$. Their respective fixation probabilities starting from state $S \subseteq V$ are denoted by $f_{\max}(S;M)$ and $f_{\min}(S;M)$, and again the indicators $S$ or/and $M$ are omitted when clear from context. The transition probabilities of these processes are $p_{S^{(k)}_{\max}}^{S^{(k+1)}_{\max}} := \sum_{j \in S_{k+1}} p_{S^{(k)}_{\max}}^j$, $p_{S^{(k)}_{\max}}^{S^{(k-1)}_{\max}} := \sum_{j \in S_{k-1}} p_{S^{(k)}_{\max}}^j$, and $p_{S^{(k)}_{\min}}^{S^{(k+1)}_{\min}} := \sum_{j \in S_{k+1}} p_{S^{(k)}_{\min}}^j$, $p_{S^{(k)}_{\min}}^{S^{(k-1)}_{\min}} := \sum_{j \in S_{k-1}} p_{S^{(k)}_{\min}}^j$, respectively. 

Finally, consider the fixation probability of Fact \ref{remark2}, i.e. a function that depends on $i \in \{1, 2, \dots, n-1 \}$ and $\widetilde{\gamma} := (\gamma_{1}, \gamma_{2},\dots, \gamma_{n-1}) \in \mathbb{R}_{>0}^{n-1}$, and let us denote it by
\begin{align*}
	g_{i}(\widetilde{\gamma}) := \frac{1+\sum_{j=1}^{i-1}\prod_{k=1}^{j}\gamma_k}{1+\sum_{j=1}^{n-1}\prod_{k=1}^{j}\gamma_k} .
\end{align*} 
%
%

\begin{lem} \label{lem: domination}
	Let $S \subseteq V$ be an initial state with $i \in \{0,1,\dots, n-1\}$ mutants in Moran process $M$, and $f(S)$ be the fixation probability. Then 
	\begin{align*}
		g_{i}(\widetilde{\gamma'}) \leq f_{\max}(S) \leq f(S) \leq f_{\min}(S) \leq g_{i}(\widetilde{\gamma''}) , 
	\end{align*}
	where $\widetilde{\gamma'} = (\gamma'_{1}, \gamma'_{2},\dots, \gamma'_{n-1})$, $\widetilde{\gamma''} = (\gamma''_{1}, \gamma''_{2},\dots, \gamma''_{n-1})$.
\end{lem}

\begin{proof}
We will present the proof for the upper bounds of $f(S)$; for the lower bounds the proof is symmetric, that is why we omit it. Also, we will consider an initial state $S \in S_1$ but, as discussed towards the end, the proof goes through for $S \in S_{i}$ for any $i \in \{1, 2, \dots, n-1\}$. The fixation probability $f(S)$ can be found by solving the following set of equations (see also equation \eqref{fixat2}):
\begin{align}\label{fixat_system}
	f(S) &= \sum_{S^{(2)} \in S_{2}} p_{S}^{S^{(2)}} f(S^{(2)})   \nonumber  \\
	f(S^{(1)}) &= \sum_{S^{(2)} \in S_{2}} p_{S^{(1)}}^{S^{(2)}} f(S^{(2)})   , \quad \text{for every } S^{(1)} \in S_1 \setminus \{S\}   \nonumber  \\
	f(S^{(2)}) &= \sum_{S^{(3)} \in S_{3}} p_{S^{(2)}}^{S^{(3)}} f(S^{(3)}) + \sum_{S^{(1)} \in S_{1}} p_{S^{(2)}}^{S^{(1)}} f(S^{(1)}) , \quad \text{for every } S^{(2)} \in S_2     \nonumber   \\
	&\vdots \\
	f(S^{(n-2)}) &= \sum_{S^{(n-1)} \in S_{n-1}} p_{S^{(n-2)}}^{S^{(n-1)}} f(S^{(n-1)}) + \sum_{S^{(n-3)} \in S_{n-3}} p_{S^{(n-2)}}^{S^{(n-3)}} f(S^{(n-3)}) , \quad \text{for every } S^{(n-2)} \in S_{n-2}    \nonumber    \\
	f(S^{(n-1)}) &= p_{S^{(n-1)}}^{V} + \sum_{S^{(n-2)} \in S_{n-2}} p_{S^{(n-1)}}^{S^{(n-2)}} f(S^{(n-2)}) , \quad \text{for every } S^{(n-1)} \in S_{n-1}   .     \nonumber  
%
\end{align}

We will show that the fixation probability $f(S)$ is upper bounded by a sequence of fixation probabilities of modified Moran processes. In particular, let us denote by $M$ the initial Moran process, by $M_{1*}, M_{2*}, \dots, M_{n-1*}$ a sequence of modified processes, and by $f_{1*}(\cdot), f_{2*}(\cdot), \dots, f_{n-1*}(\cdot)$ the corresponding absorption probabilities as functions of the state space. We will show that there exists a state $S^{(1*)} \in S_1$ such that $f(S) \leq f_{1*}(S^{(1*)}) \leq f_{2*}(S^{(1*)}) \leq \dots \leq f_{n-1*}(S^{(1*)})$.

Consider the initial Moran process $M$ and the state $S^{(1*)} := \arg \max_{S^{(1)} \in S_1} f(S^{(1)})$ i.e. the one that has the maximum fixation probability in $S_1$. By definition, we have $f(S) \leq f(S^{(1*)})$.
We define the modified process $M_{1*}$ to be the process in which for any given $S^{(2)} \in S_2$, all the directed edges towards states in $S_1$ (i.e. having probabilities $p_{S^{(2)}}^j$, $j \in S_1$), are merged into one directed edge towards $S^{(1*)}$ which has probability $\sum_{j \in S_1} p_{S^{(2)}}^j$. Note that $M_{1*}$ has only $S^{(1*)}$ in $S_1$. Using the system of equations \eqref{fixat_system} for $M_{1*}$ and comparing them with those of $M$, it is immediate that $f(S^{(1*)}) \leq f_{1*}(S^{(1*)})$. 

We will now inductively define all modified processes. Given the Moran process $M_{k-1*}$, $k \in \{2, 3, \dots, n-1\}$, we define $M_{k*}$ to be the modified $M_{k-1*}$ where the only state in $S_k$ is $S^{(k*)} := \arg \max_{S^{(k)} \in S_k} f_{k-1*}(S^{(k)})$. In particular, for any given $S^{(k+1)} \in S_{k+1}$, all the directed edges towards states in $S_k$ (i.e. having probabilities $p_{S^{(k+1)}}^j$, $j \in S_k$), are merged into one directed edge towards $S^{(k*)}$ which has probability $\sum_{j \in S_k} p_{S^{(k+1)}}^j$. Also, for any given $S^{(k-1)} \in S_{k-1}$, all the directed edges towards states in $S_k$ (i.e. having probabilities $p_{S^{(k-1)}}^j$, $j \in S_k$), are merged into one directed edge towards $S^{(k*)}$ which has probability $\sum_{j \in S_k} p_{S^{(k-1)}}^j$. Again, note that $M_{k*}$ has only $S^{(k*)}$ in $S_k$, and that by using the system of equations \eqref{fixat_system} for it we get $f_{k-1*}(S^{(1*)}) \leq f_{k*}(S^{(1*)})$.

From the above, it is implied that $f(S) \leq f_{n-1*}(S^{(1*)})$, which allows us to draw two very important conclusions. First, that $M_{n-1*}$ is a Moran process with only $n+1$ states $\emptyset, S^{(1*)}, S^{(2*)}, \dots, S^{(n-1*)}, V$, and therefore can be also viewed as a random walk on these states. Second, one can observe that starting from any state $S \in S_i$, for any $i \in \{1,2, \dots, n-1\}$ and using an almost identical proof, we can upper bound the fixation probability of any other non-singleton mutant set. 

The aforementioned conclusions immediately imply that given any $S \in S_i$, for any $i \in \{1,2, \dots, n-1\}$, the upper bound $f_{n-1*}(S^{(i*)})$ can be computed by the formula of Fact \ref{remark2}. The fact that the final process $M_{n-1*}$ has only $n+1$ states is positive, since its fixation probabilities can be computed by solving only that many linear equations. However, observe that we do not have a polynomial time algorithm to do that, since to find the coefficients (sums of transition probabilities) of the equations of $M_{n-1*}$ we need to find the special states $S^{(k*)} \in S_k$ for each $k \in \{1,2, \dots, n-1\}$, and we do not have an efficient algorithm for that (some $S_k$'s have exponentially many states).

There is a way to circumvent this obstacle though, by further upper bounding $f_{n-1*}(S^{(k*)})$. Note that for the formula of Fact \ref{remark2}, we have
\begin{align*}
	f_{n-1*}(S^{(i*)}) = \frac{1+\sum_{j=1}^{i-1}\prod_{k=1}^{j}\gamma_k}{1+\sum_{j=1}^{n-1}\prod_{k=1}^{j}\gamma_k} = \frac{1}{1+ \frac{\sum_{j=i}^{n-1}\prod_{k=1}^{j}\gamma_k}{1+\sum_{j=1}^{i-1}\prod_{k=1}^{j}\gamma_k}} =  \frac{1}{1+ \frac{1}{1 + \sum_{j=1}^{i-1} \left(\prod_{k=1}^{j}\gamma_k\right)^{-1}} \cdot \sum_{j=i}^{n-1} \prod_{k=i}^{j}\gamma_k},
\end{align*} 
where $\gamma_{k}$ is the backward bias of state $S^{(k*)}$. It is again immediate that this is upper bounded by replacing each $\gamma_{k}$ with some $\gamma^{*}_{k} \leq \gamma_{k}$. Therefore, to (efficiently) compute an upper bound on $f_{n-1*}(S^{(i*)})$ it suffices that we (efficiently) compute a lower bound on the cumulative backward biases in each $S_k$, $k \in \{1,2, \dots, n-1\}$. When for each $k$ we choose $\gamma^{*}_{k} = \min_{X \in S_k} \gamma^{c}(X)$, we get fixation probability $f_{\min}(S)$, whereas if we choose some $\gamma^{*}_{k} \leq \min_{X \in S_k} \gamma^{c}(X)$ we get fixation probability $g_{i}(\widetilde{\gamma^*}) \geq f_{\min}(S)$. This completes the proof for the upper bounds.

It is easy to see that a symmetric proof holds for lower bounding a fixation probability of any given Moran process $M$ for any mutant set $S \subseteq V$: states $S^{(k*)}$ are defined as minimizers of fixation, and $f_{n-1*}(S^{(i*)})$ is further lower bounded by considering backward biases $\gamma^{*}_{k} \geq \gamma_{k}$. 

\end{proof}

In most of the results of this work we will utilize the above lemma and the following corollary that comes directly from it. In particular, by giving bounds on the backward biases of the process at hand we will be bounding their fixation probabilities.

\begin{corollary} \label{cor:domination}
	Consider two Moran processes $P=(P(t))_{t\geq0}$ and $Q=(Q(t))_{t\geq0}$ with transition matrices $\mathbf{p}$ and $\mathbf{q}$ respectively. Let the initial state be $S \subseteq V$. If for any given $X,Y \in 2^V$ it holds that $p_{X}^{Y} \leq q_{X}^{Y}$ whenever $|Y|=|X|-1$, and $p_{X}^{Y} \geq q_{X}^{Y}$ whenever $|Y|=|X|+1$, then $f(S;P) \geq f_{\max}(S;Q)$, and $f(S;Q) \leq f_{\min}(S;P)$.
\end{corollary}
	
\begin{proof}
Assume that the conditions of the corollary's statement are true. Then, for any given $X,Y,Y' \in 2^V$ for which it holds that $|Y|=|X|-1$ and $|Y'|=|X|+1$, we have $p_{X}^{Y} \leq q_{X}^{Y}$, and $p_{X}^{Y'} \geq q_{X}^{Y'}$. 
Consequently, 
for the cumulative backward biases of $X$ in processes $P$ and $Q$ it holds that
\begin{align} \label{eq: cum_back_bias_ineq}
	\gamma^{c}(X;P) \leq \gamma^{c}(X;Q) , \quad \text{for every $X \in 2^V \setminus \{\emptyset, V\}$.} 
\end{align}

First, from \eqref{eq: cum_back_bias_ineq} we get $\max_{X \in S_k} \gamma^{c}(X;P) \leq \max_{X \in S_k} \gamma^{c}(X;Q)$ for every $k \in \{1,2,\dots, n-1\}$. So, by Lemma \ref{lem: domination} for $\gamma'_{k} = \max_{X \in S_k} \gamma^{c}(X;Q)$ for all $k \in \{1,2,\dots, n-1\}$, we get $f(S;P) \geq f_{\max}(S;P) \geq f_{\max}(S;Q)$. 

Also, again from \eqref{eq: cum_back_bias_ineq} we get $\min_{X \in S_k} \gamma^{c}(X;Q) \geq \min_{X \in S_k} \gamma^{c}(X;P)$ for every $k \in \{1,2,\dots, n-1\}$. Therefore, by Lemma \ref{lem: domination} for $\gamma''_{k} = \min_{X \in S_k} \gamma^{c}(X;P)$ for all $k \in \{1,2,\dots, n-1\}$, we get $f(S;Q) \leq f_{\min}(S;Q) \leq f_{\min}(S;P)$. 
\end{proof}

	Also, for some of the proofs in this work we will need to employ additional stochastic domination results. The following set of results are due to \cite{DGRS16}, and rigorously establish a fact that had been intuitively accepted and used in previous works: removing a mutant from the mutant set and/or lowering their relative fitness can only decrease the fixation probability. In \cite{DGRS16} the results are for the single-graph Moran process on any digraph; however we find that their proofs hold for the two-graphs Moran process as well. 
	
	In the aforementioned paper, first, a coupling between two continuous-time Moran processes on an arbitrary digraph is provided. The continuity of the processes gets around a disturbing but yet interesting problem: a coupling between discrete Moran processes is not always possible. The intuition for the root of this problem is that when a node $v$ becomes mutant (with fitness $r > 1$), all other nodes become less likely to reproduce in the next round, and additionally, all nodes have to co-ordinate and point to a single node to reproduce in the next step. The coupling is achieved by continuous versions of the Moran processes, where, at any time $t$ each node $v$ reproduces at $t+\tau_v$, with $\tau_v$ being chosen according to the exponential distribution. The distribution $v$ is independent of any other node's distribution and parameterized by $v$'s fitness. 
	The discrete-time Moran process is then recovered by considering the sequence of reproductions that occurred throughout the continuous-time process.
	
	The following Lemma is a generalized statement of that of Lemma 5 in \cite{DGRS16}, since it captures two-graphs Moran processes. Its proof is omitted due to the fact that it is identical to the one regarding single-graph Moran processes.
	
	\begin{lem}[Coupling lemma - Lemma 5, \cite{DGRS16}]
		Let $G_{R}=(V,E_R)$ and $G_{M}=(V,E_M)$ be any two digraphs, let $Y \subseteq Y' \subseteq V$ and $1 \leq r \leq r'$. Let $\widetilde{Y}(t)$ and $\widetilde{Y}'(t)$ ($t \geq 0$) be continuous-time Moran processes on the pair ($G_R, G_M$) with mutant fitness $r$ and $r'$, respectively, and with $\widetilde{Y}(0) = Y$ and $\widetilde{Y}'(0) = Y'$. There is a coupling between the two processes such that $\widetilde{Y}(t) \subseteq \widetilde{Y}'(t)$ for all $t \geq 0$.
	\end{lem}
	
	The theorem below together with its two following corollaries are also generalized statements of those of Theorem 6 and Corollaries 7 and 8 in \cite{DGRS16}. All of these results regarding the single-graph Moran process hold for the two-graphs Moran process since their proofs are unaffected by the generalization. 
	
	\begin{thm}[Theorem 6, \cite{DGRS16}]
		For any pair of digraphs $G_R$ and $G_M$, if $0 < r \leq r'$ and $S \subseteq S' \subseteq V$, then $f^{S}(G_R, G_M ; r) \leq f^{S'}(G_R, G_M ; r')$
	\end{thm}
	
	\begin{corollary}[Monotonicity - Corollary 7, \cite{DGRS16}]
		If $0 < r \leq r'$ then, for any pair of digraphs $G_R$, $G_M$, we have $f(G_R, G_M ; r) \leq f(G_R, G_M ; r')$.
	\end{corollary}
	
	\begin{corollary}[Subset domination - Corollary 8, \cite{DGRS16}]\label{cor:subset_dom}
		For any pair of digraphs $G_R$, $G_M$ and any $r > 0$, if $S \subseteq S' \subseteq V$, then $f^{S}(G_R, G_M ; r) \leq f^{S'}(G_R, G_M ; r)$.
	\end{corollary}
	

	Using the Markov chain abstraction from the begining of Section \ref{abstraction} and the latter stochastic domination arguments, we can prove the following general upper bound on the fixation probability:

	\begin{thm}\label{thm:generalupper}
		For any pair of digraphs $G_R$ and $G_M$ with $n=|V|$, the fixation probability $f(G_R, G_M)$ is upper bounded by $1 - \frac{1}{r+n}$, for $r>0$. For constant $r$, this bound is asymptotically tight.
	\end{thm}
	
	\begin{proof}
		We refer to the proof of Lemma 4 of \cite{D14}, as our proof is essentially the same. Briefly, we find an upper bound on the fixation probability of a relaxed Moran process that favors the mutants, where we assume that fixation is achieved when two mutants appear in the population. In their work the resident and mutant graphs are the same and undirected, but this does not change the probabilities of the first mutant placed uniformly at random to be extinct or replicated in our model. Finally, we note that this bound is asymptotically (as $n \to \infty$) tight when $r$ is constant, by Theorem \ref{lemma1}. 
	\end{proof}

	We now prove a generalization of the Isothermal Theorem of \cite{L05}.
	
	\begin{thm}[Generalized Isothermal Theorem]\label{GIT}
		Let $G_R=(V, E_R)$, $G_M=(V, E_M)$ be two directed graphs. The generalized Moran process with two graphs as described above has the Moran fixation probability if:
		\begin{enumerate}
			\item $\sum_{j \neq i}w_{ji}^R = \sum_{j \neq i}w_{ji}^M = 1$, $\forall i \in V$, that is, $W_R$ and $W_M$ are doubly stochastic, i.e. $G_R$ and $G_M$ are isothermal (actually one of them being isothermal is redundant as it follows from the second condition), and \\
			\item for every pair of nodes $i,j \in V$: $w_{ij}^R + w_{ji}^R = w_{ij}^M + w_{ji}^M$.
		\end{enumerate}
	\end{thm}

	In order to prove the above theorem, we first prove the following auxiliary lemma.
	\begin{lem}\label{lem:eq_con_GIT}
		The conditions of Theorem \ref{GIT} are equivalent to the following set of equalities:
		\begin{align}\label{rel1}
		\sum_{i \notin S}\sum_{j \in S}w_{ij}^R = \sum_{i \notin S}\sum_{j \in S}w_{ji}^M , \quad \forall \emptyset \subset S \subset V.
		\end{align}
	\end{lem}

	\begin{proof}
%
		We first show that equations \ref{rel1} are sufficient for Conditions 1 and 2 of Theorem \ref{GIT}.
		Consider all the states where only one node $i$ is resident, i.e. $S=V \setminus \{i\}$. Then from (\ref{rel1}) we get the following set of equations that must hold:
		\begin{align}\label{rel2}
		\sum_{j \in V \setminus \{i\}}w_{ji}^M = \sum_{j \in V \setminus \{i\}}w_{ij}^R = 1 , \quad \forall i \in V.
		\end{align}
		Similarly, for all the states where $S=\{i\}$ we get from (\ref{rel1}):
		\begin{align}\label{rel3}
		\sum_{j \in V \setminus \{i\}}w_{ji}^R = \sum_{j \in V \setminus \{i\}}w_{ij}^M = 1 , \quad \forall i \in V.
		\end{align}
		Observe that \eqref{rel2} and \eqref{rel3} constitute Condition 1 of Theorem \ref{GIT}.
		Now, (for general $S$) the two parts of (\ref{rel1}) are:
		\begin{align}
		\sum_{i \notin S}\sum_{j \in S}w_{ij}^R = |V|-|S| - \sum_{i \notin S}\sum_{j \notin S \cup \{i\}}w_{ij}^R  \label{rel4}
		\end{align}
		\begin{align}
		\text{and} \quad \sum_{i \notin S}\sum_{j \in S}w_{ji}^M = |V|-|S| - \sum_{i \notin S}\sum_{j \notin S \cup \{i\}}w_{ji}^M , \quad \text{(using (\ref{rel2})).}  \label{rel5}
		\end{align}
		Thus, by relation (\ref{rel1}) it must be: 
		\begin{align}
		\sum_{i \notin S}\sum_{j \notin S \cup \{i\}}w_{ij}^R = \sum_{i \notin S}\sum_{j \notin S \cup \{i\}}w_{ji}^M , \quad \forall \emptyset \subset S \subset V,.  \label{rel6}
		\end{align}
		Now, consider all the states where only two nodes $i$ and $j$ are resident, i.e. $S=V \setminus \{i,j\}$. Then from relation (\ref{rel6}) we get the following set of relations that must hold:
		\begin{align}\label{rel7}
		w_{ij}^R + w_{ji}^R = w_{ij}^M + w_{ji}^M  , \quad \forall i,j \in V,
		\end{align}
		which is Condition 2 of Theorem \ref{GIT}.
		
		To prove the other direction of the equivalence we show that the Conditions of Theorem \ref{GIT}, which correspond to the sets of relations \eqref{rel2}, \eqref{rel3}, \eqref{rel7}, suffice to make \eqref{rel1} true.
		If (\ref{rel7}) is true, then (\ref{rel6}) is obviously true. From (\ref{rel2}) and (\ref{rel3}), the right-hand sides of (\ref{rel4}) and (\ref{rel5}) are equal, therefore their left-hand sides are equal, and as a consequence (\ref{rel1}) is true. 
	\end{proof}

	\begin{proof}(\textit{of Theorem \ref{GIT}})
		We will prove that if the conditions of Theorem \ref{GIT} are satisfied, then the two-graphs Moran process has the Moran fixation probability, i.e. $f_{\text{Moran}}=\left(1-\frac{1}{r}\right)/\left(1-\frac{1}{r^n}\right)$ for any positive real $r \neq 1$.
		
		Suppose the given resident and mutant digraphs $G_R$ and $G_M$ satisfy the conditions of Theorem \ref{GIT}. For a given state of the Moran process with mutant set $S$, where $0 < |S| < |V|$, let $p_{|S|}^{|S|-1}$ and $p_{|S|}^{|S|+1}$ be the probabilities for the process to go to some state with $|S|-1$ and $|S|+1$ mutants, respectively. Using \eqref{fixation1} we have $p_{|S|}^{|S|-1} = \left(\sum_{i \notin S}\sum_{j \in S}w_{ij}^R \right)/F$ and $p_{|S|}^{|S|+1} = \left(r \cdot \sum_{i \in S}\sum_{j \notin S}w_{ij}^M \right)/F$. By Lemma \ref{lem:eq_con_GIT}, we have $p_{|S|}^{|S|-1} / p_{|S|}^{|S|+1} = 1/r$, for every $\emptyset \subset S \subset V$. Then the Moran process at hand can be modelled by a birth-death process as described in Fact \ref{remark2}, where additionally, the backward bias is $1/r$ independent of the state of the process (see Example 3.9.6 in \cite{GS01}). In Fact \ref{remark2}, a formula is provided for the fixation probability given that the mutant set $S$ consists of $i=|S|$ mutants. For $i=1$ the formula gives the fixation probability, which equals $\left(1-\frac{1}{r}\right)/\left(1-\frac{1}{r^n}\right)$. 
	\end{proof}
	
	Observe that for the special case $G_R=G_M$, Theorem \ref{GIT} is the Isothermal Theorem of the generalized Moran process of \cite{L05} that has been studied so far.

	\section{A Strategic Game View} \label{strategicgame}
	In this section we study the aforementioned process from a game-theoretic point of view. Consider the strategic game with 2 players; residents (type R) and mutants (type M), so the player set is $N=\{R,M\}$. The action set of a player $k \in N$ consists of all possible strongly connected graphs\footnote{We assume strong connectivity in order to avoid problematic cases where the fixation probability is 0.} $G_k = (V,E_k)$ that she can construct with the available vertex set $V$. The payoff for the residents (player $R$) is the probability of extinction, and the payoff for the mutants (player $M$) is the probability of fixation. Since fixation and extinction are complementary events, the sum of payoffs equals 1, so the game can be reduced to a zero-sum game.
	
	The natural question that emerges is: what are the pure Nash equilibria of this game (if any) when $r$ is fixed and $n \to \infty$? As a toy example consider some fixed constant $r > 1$, and that each player has the small action set $\{ K_n, S_n \}$. To avoid the case of both players choosing $S_n$ with different central nodes, for the sake of simplicity we add the constraint that the $S_n$ action for both players comes with some arbitrary, fixed $v \in V$ that is the centre. From our results from Section \ref{USTvsCL}, when $n \to \infty$, we get $f(K_n , S_n) \to 0, f(S_n , K_n) \to 1$ and from \cite{N06,BR08}, $f(K_n , K_n) \to 1-1/r$ and $f(S_n , S_n) \to 1 - 1/r^2$. Therefore, we get the following bimatrix game:

	\begin{table}[H]
		\setlength{\extrarowheight}{2pt}
		\begin{tabular}{cc|c|c|}
			& \multicolumn{1}{c}{} & \multicolumn{2}{c}{Player $M$}\\
			& \multicolumn{1}{c}{} & \multicolumn{1}{c}{$K_n$}  & \multicolumn{1}{c}{$S_n$} \\\cline{3-4}
			\multirow{2}*{Player $R$}  & $K_n$ & $1/r \quad,\quad 1-1/r$ & $1 \quad,\quad 0$ \\\cline{3-4}
			& $S_n$ & $0 \quad,\quad 1$ & $1/r^2 \quad,\quad 1-1/r^2$ \\\cline{3-4}
		\end{tabular}
	\end{table}
	\noindent which has a pure Nash equilibrium, namely $(K_n , K_n)$. 
	
	Trying to better understand the behaviour of the two conflicting families of graphs, we put some pairs of graph families to the test. The main question we ask in this work is: what is the best response graph $G_R$ of the residents to the Clique graph of the mutants?
	
	The bounds on the fixation or extinction probability in this work can be viewed as bounds on the payoff of player $M$ or player $R$ respectively.

	\subsection{Star vs Clique} \label{USTvsCL}
	
	The following result implies (since $(n-4)!^{-1/(n-2)} \to 0$ as $n \to \infty$) that when the mutant graph is complete and the resident graph is the undirected star, the fixation probability tends to 1 as $n \to \infty$.

	\begin{thm}\label{lemma1}
		If $G_R = S_n$ and $G_M = K_n$ for $r>(n-4)!^{-1/(n-2)}$, then the fixation probability is lower bounded by $1 - \frac{1}{n} - \frac{1}{r(n-2)} - \frac{1}{r^{2}(n-3)}$.
	\end{thm}
	
	\begin{proof}
		We will find a lower bound to the fixation probability of our process $P$, by giving a lower bound on the fixation probability of a process $Q$ that is dominated by (has at most the fixation probability of) $P$. 
	
		The Markov chains of $P$ and $Q$ are depicted in Figure \ref{UST}.
		\begin{figure}[H]
			\begin{center}
			\includegraphics[scale=0.68]{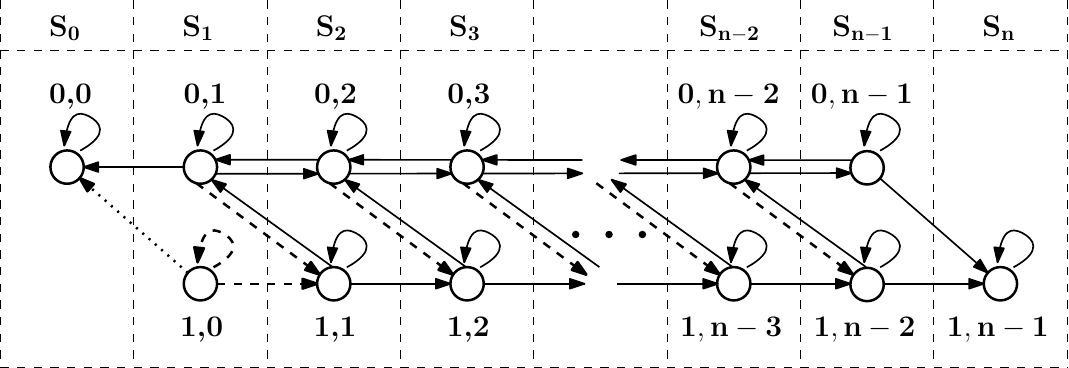}
			\caption{The Markov chains for process $P$ and $Q$. In $Q$, the dashed edges have probability $0$ (their former probability is added to the self loop) and the dotted one has probability $1$.}\label{UST}
			\end{center}
		\end{figure}
		We denote by $(c,\ell)$ the state of process $P$ that has $c$ mutants at the center of $G_R$ and $\ell$ mutants at the leaves of $G_R$. We have 
		\begin{align} \label{eq: f(P)}
			f(P) = \left(1-\frac{1}{n}\right) f((0,1);P) + \frac{1}{n} f((1,0);P) .
		\end{align}
	
		In $P$, the transition probabilities for $1 \leq i \leq n-1$ are:
		\begin{align*}
			p_{0,i}^{0,i-1} &= \frac{1}{ir+n-i} \cdot \frac{i}{n-1},  \quad p_{1,i-1}^{0,i-1} = \frac{n-i}{ir+n-i},  \\
			p_{0,i}^{0,i+1} &= \frac{ir}{ir+n-i} \cdot \frac{n-i-1}{n-1}, \quad		p_{0,i}^{1,i} = \frac{ir}{ir+n-i} \cdot \frac{1}{n-1},  \quad	p_{1,i-1}^{1,i} = \frac{ir}{ir+n-i} \cdot \frac{n-i}{n-1} .
		\end{align*}

		Let us define the intermediate process $Q'$ in which the only difference with $P$ is that $q_{1,0}^{0,0} = 1$, $q_{1,0}^{1,0} = q_{1,0}^{1,1} = 0$. Now observe that $f((1,0);Q') = 0$, and since by starting from $(0,1)$ we can never reach $(1,0)$ the systems of equations \eqref{fixat_system} for $P$ and $Q'$ are identical, thus we get 
		\begin{align} \label{eq: P,Q'}
			f((0,1);P) = f((0,1);Q').
		\end{align}

		Now we will define process $Q$, which is $Q'$ modified such that $q_{0,i}^{1,i} = 0$ for all $i \in \{1, 2, \dots, n-2\}$. Then Corollary \ref{cor:domination} applies when the process starts from $(0,1)$, and gives $f((0,1);Q') \geq f_{\max}((0,1);Q) = f((0,1);Q)$, where the equality comes from the fact that, essentially, when we start from $(0,1)$, $Q$ is identical to $Q_{\max}$: observe that when $Q$ starts from $(0,1)$, it is equivalent to the Markov chain depicted in Figure \ref{UST}, where the states $(1,i-1)$ for all $i \in \{1, 2, \dots, n-1\}$ are removed, since they can never be reached. The latter inequality, combined with \eqref{eq: P,Q'} in \eqref{eq: f(P)} gives 
		\begin{align} \label{eq: f(P) > f(Q)}
			f(P) \geq \left(1-\frac{1}{n}\right) f((0,1);Q) .
		\end{align}
		
		What remains is to show the required lower bound on $f((0,1);Q)$. $Q$ starting from $(0,1)$ is a random walk on $\{(0,0),(0,1), \dots, (0,n-1), (1,n-1)\}$, or equivalently, by considering only the cardinality on the aforementioned states, the random walk is on $\{0,1,\dots, n\}$. The backward biases now are
		\begin{align*}
			&\gamma_k = \frac{1}{r} \cdot \frac{1}{n-k-1}  \quad \text{, for } 1 \leq k \leq n-2 \\
			\text{and } \quad &\gamma_{n-1} = \frac{1}{r} . 
		\end{align*}
		By using Fact \ref{remark2} for $i=1$ we get:
		\begin{align*}
			f((0,1);Q) &= \frac{1}{1 + \sum_{j=1}^{n-2} \prod_{k=1}^{j} \gamma_k + \frac{1}{r^{n-1}(n-2)(n-3)\cdots 1}} \nonumber \\
			&\geq \frac{1}{1 + \frac{1}{r(n-2)} + \frac{1}{r^{2}(n-2)(n-3)}\cdot(n-2)} \qquad \text{, for } r>(n-4)!^{-1/(n-2)} \nonumber \\
			&= \frac{1}{1 + \frac{1}{r(n-2)} + \frac{1}{r^{2}(n-3)}} ,
		\end{align*}
		and consequently, from \eqref{eq: f(P) > f(Q)} we get:
		\begin{align*}
			f(P) &\geq \frac{1-\frac{1}{n}}{1 + \frac{1}{r(n-2)} + \frac{1}{r^{2}(n-3)}} \\
			&> 1 - \frac{1}{n} - \frac{1}{r(n-2)} - \frac{1}{r^{2}(n-3)} \\
			&\rightarrow 1 \quad \text{as} \quad n \rightarrow \infty.
		\end{align*}
		This completes the proof of Theorem \ref{lemma1}.

	\end{proof}

	The latter result highlights an important difference between the single-graph and the two-graphs Moran process. Particularly, in the former, for $r=1$ any strongly connected digraph on $n$ nodes yields fixation probability $1/n$; this is by the observation that the proof of Lemma 1 of \cite{D14} holds for any strongly connected digraph. However, in the two-graphs process, we see that the structure of the graphs plays a central role in the magnitude of the fixation probability ($f(S_n , K_n) \to 1$ for any constant $r$). In fact, as it will become apparent from the results in this work, for any constant value of $r$, it could be that $f \to 0$ or $f \to 1$ depending on the pair of graphs under examination. Therefore, no constant value of $r$ can significantly affect $f$ in the two-graphs model.

	It is worth noting that, since the game we defined in the beginning of this section is a 1-sum game, we immediately can get upper (resp. lower) bounds on the payoff of player $R$, i.e. the extinction probability, by deducing lower (resp. upper) bounds on the payoff of player $M$, i.e. the fixation probability.

	The following lemma that connects the fixation probability of a process with given relative fitness, resident and mutant graphs, with the fixation probability of a ``mirror'' process where the roles between residents and mutants are exchanged.
	\begin{lem}\label{mirror}
		For any $r > 0$, and any pair $G_R$, $G_M$ of strongly connected directed graphs, $f(G_R, G_M ; r) \leq 1 - f(G_M, G_R ; 1/r)$.
	\end{lem}
	
	\begin{proof}
		We first prove the following:
		
		\begin{claim}\label{claim1}
			For any $r > 0$, and any pair $G_R$, $G_M$ of strongly connected directed graphs, $f^{S}(G_R, G_M ; r) = 1 - f^{V \setminus S}(G_M, G_R ; 1/r)$.
		\end{claim}
		\begin{proof}
			The probability of fixation for a mutant set $S$ and mutant graph $G_M$ is the same as the probability of extinction of the resident set $V \setminus S$, i.e. one minus the probability of the set $V \setminus S$ conquering the graph. Thus, if we exchange the labels of residents and mutants, the relative fitness of the new residents is 1 and the relative fitness of the new mutants is $1/r$, the new resident graph is $G_M$, the new mutant graph is $G_R$ and the new mutant set is $V \setminus S$.  
		\end{proof}

		We can now prove Lemma \ref{mirror} as follows:	
		By the above claim we have $f^{\{u\}}(G_R, G_M ; r) = 1 - f^{V \setminus \{u\}}(G_M, G_R ; 1/r)$ for every $u \in V$. 
		Since by Corollary \ref{cor:subset_dom} we have $f^{\{v\}}(G_M, G_R ; 1/r) \leq f^{V \setminus \{u\}}(G_M, G_R ; 1/r)$ for every $v \neq u$, we get that $f^{\{u\}}(G_R, G_M ; r) \leq 1 - f^{\{v\}}(G_M, G_R ; 1/r)$. 
		Let $\pi$ be a derangement of $V$, and let $\pi(u)$ denote the node $v \neq u$ that is in the derangement at the corresponding place of $u$. Then, starting from the definition of the fixation probability, we have
%
		\begin{align*}
		f(G_R, G_M ; r) &= \frac{1}{n}\sum_{u\in V} f^{\{u\}}(G_R, G_M ; r)                \\
		&\leq \frac{1}{n}\sum_{u\in V} \left(1-f^{\{\pi(u)\}}(G_M, G_R ; 1/r)\right)  \quad \text{(see above paragraph)} \\
		&= \frac{1}{n}\sum_{v\in V} \left(1-f^{\{v\}}(G_M, G_R ; 1/r)\right)          \\
		&= 1 - f(G_M, G_R ; 1/r).                                         
		\end{align*}
		
	\end{proof}
	Lemma \ref{mirror} provides easily an upper bound on the fixation probability of a given process when a lower bound on the fixation probability is known for its ``mirror'' process. For example, using Theorem \ref{lemma1} and Lemma \ref{mirror} we get an upper bound $\frac{1}{n} + \frac{r}{(n-2)} + \frac{r^{2}}{n-3}$ for $r < (n-4)!^{1/(n-2)}$ on the fixation probability of $K_n$ vs $S_n$; this immediately implies that the probability of fixation in this case tends to 0 as $n \to \infty$ for $r \in o(\sqrt{n})$. However, as we subsequently explain, a more precise lower bound is necessary to reveal the approximation restrictions of that particular process.

	\begin{thm}\label{lemma5}
		If $G_R = K_n$ and $G_M = S_n$ for $r>0$, then the fixation probability is upper bounded by $\frac{r^{n-1}}{(n-2)!}$.
	\end{thm}
	
	\begin{proof}
		The proof is similar to that of Theorem \ref{lemma1}. Namely, we will find an upper bound on the fixation probability of our process $P$ by giving an upper bound on the following process $Q$ that dominates (has at least the fixation probability of) it. 
		
		The Markov chains of $P$ and $Q$ are depicted in Figure \ref{K-S}.
		\begin{figure}[H]
			\begin{center}
			\includegraphics[scale=0.68]{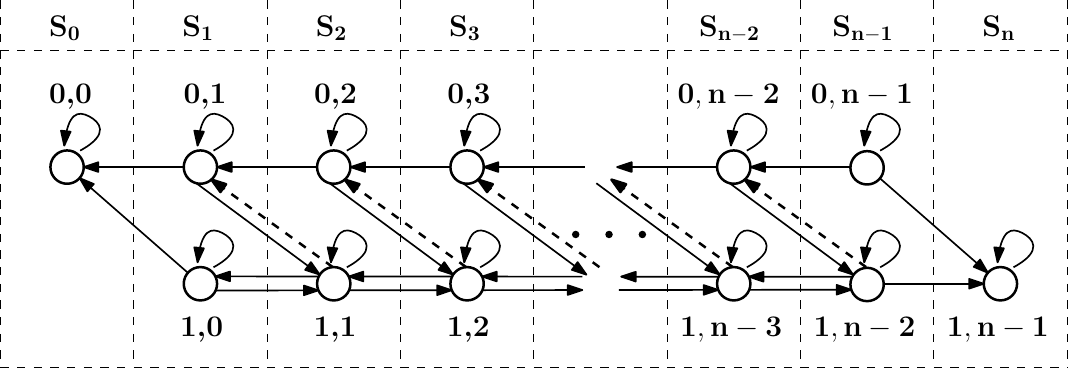}
			\caption{The Markov chains for process $P$ and $Q$. In $Q$, the dashed edges have probability $0$ (their former probability is added to the self loop).}\label{K-S}
			\end{center}
		\end{figure}
		Here we use the same notation as in the proof of Theorem \ref{lemma1} for the states of the induced Markov chains. 
%
		%
%
	In $P$, the transition probabilities for $1 \leq i \leq n-1$ are:
	\begin{align*}
		p_{0,i}^{0,i-1} &= \frac{n-i}{ir+n-i} \cdot \frac{i}{n-1},  \quad p_{1,i-1}^{0,i-1} = \frac{n-i}{ir+n-i} \cdot \frac{1}{n-1},  \\
		p_{1,i-1}^{1,i-2} &= \frac{n-i}{ir+n-i} \cdot \frac{i-1}{n-1}, \quad \text{for } i \neq 1  , \\
		p_{0,i}^{1,i} &= \frac{ir}{ir+n-i}, \quad p_{1,i-1}^{1,i} = \frac{r}{ir+n-i} \cdot \frac{n-i}{n-1}.
	\end{align*}

	For the rest of the proof we will need to refer to the proof of Theorem \ref{lemma1} itself, not just the statement. In particular, first observe that starting process $P$ from any state in $S_1$, we cannot reach $(0,n-1)$, therefore, we can remove it from the Markov chain. Now let us define the modified process $Q'$ where the only difference from $P$ is that $q_{1,n-2}^{0,n-2} = 0$ (and $p_{1,n-2}^{0,n-2}$ is added to $q_{1,n-2}^{1,n-2}$). According to the proof of Theorem \ref{lemma1}, it holds that $f((0,1);P) \leq f((0,1);Q')$ and $f((1,0);P) \leq f((1,0);Q')$, thus the fixation probability $f(P)$ is upper bounded by $f(Q')$. Now observe that starting from any state $S_1$, $(0,n-2)$ is not reachable, and similarly to the case of $P$, we can remove it from the chain, leaving $(1,n-3)$ as the unique state of $S_{n-2}$.
	
	Again, we define another intermediate process $Q''$ in which $q_{1,n-3}^{0,n-3} = 0$, and similarly to $Q'$, according to the proof of Theorem \ref{lemma1} we have $f(Q') \leq f(Q'')$. By sequentially modifying the process such that eventually $q_{1,n-i}^{0,n-i} = 0$ for every $i \in \{2, 3, \dots, n-1 \}$, we end up with process $Q^{(n-2)}$, where $f(P) \leq f(Q^{(n-2)})$. Finaly, observe that for the cumulative backward biases holds that $\gamma^{c}((0,1);Q^{(n-2)}) = \gamma^{c}((1,0);Q^{(n-2)}) = 1/r$, therefore we can reduce the set $S_1$ of $Q^{(n-2)}$ to a singleton without changing the fixation probability. Without loss of generality we have $S_1 = \{(1,0)\}$ and let us call the resulting process $Q$. Finally, for ease of presentation, we identify the states of $Q$ with their cardinality, i.e. $\{0,1, \dots, n\}$, and $Q$ is clearly a random walk on these states with backward biases:
		\begin{align*}
		&\gamma_1 = \frac{1}{r} , \quad  \text{and,} \\
		&\gamma_k = \frac{1}{r} \cdot (k-1) , \quad \text{ for } 2 \leq k \leq n-1 . 
		\end{align*}
		The fixation probability of $Q$ can be found by using Fact \ref{remark2} for $i=1$:
		\begin{align*}
		f(Q) &= \frac{1}{1 + \frac{1}{r}1 + \frac{1}{r^2}1 + \frac{1}{r^3}2 + \frac{1}{r^4}3! + \cdots + \frac{1}{r^{n-1}}(n-2)!}  \nonumber   \\
		& \leq \frac{r^{n-1}}{(n-2)!}  \\
		& \in o\left(\frac{1}{a^n}\right) , \quad \text{ for any constant } a>1. \nonumber  
		\end{align*}
		This completes the proof of Theorem \ref{lemma5}. 
	\end{proof}

	This bound shows that not only there exists a graph that suppresses selection against $S_n$ (which is an amplifier in the single-graph setting), but it also does that with great success. In fact, the last row of the above expression shows that for any mutant with constant $r$ arbitrarily large, its fixation probability is less than exponentially small.

	In view of the above, the following result implies that the fixation probability in our model cannot be approximated via a method similar to \cite{D14}, because the lowest fixation probability among all pairs of graphs is less than exponentially small in the input size of the graphs.
	\begin{thm}[Bounds on the two-graphs Moran process]\label{th_bound}
		There is a pair of graphs $G_R, G_M$ such that the fixation probability $f(G_R, G_M)$ is $o\left(\frac{1}{a^n}\right)$, for constant $a>1$, when the relative fitness $r$ is constant. Furthermore, there is a pair of families of graphs $G_R', G_M'$ such that the fixation probability $f(G_R', G_M')$ is at least $1-O\left(\frac{1}{n}\right)$, for constant $r>0$.
	\end{thm}
	
	\begin{proof}
		See Theorem \ref{lemma1} and proof of Theorem \ref{lemma5}.  
	\end{proof}

	\subsection{Arbitrary Undirected Graphs vs Clique}
	
	The following result is a lower bound on the fixation probability. We remind that, according to Section \ref{definitions}, a resident (resp. mutant) edge between nodes $x,y \in V$ of an undirected graph represents two distinct positive weights, $w_{xy}^R, w_{yx}^R$ (resp. $w_{xy}^M, w_{yx}^M$), and should not be confused with typical graph-theoretic settings where an undirected edge is associated with a single weight. We remind that under the current definition of weights, if nodes $x$ and $y$ of type $T_x , T_y \in \{ R, M \}$ respectively have different out-degree, then $w_{xy}^{T_x} \neq w_{yx}^{T_y}$.
	
	\begin{thm}\label{LB}
		When the resident graph is an undirected graph for which $w_{xy}^R/w_{yx}^R \leq c$ for every $(xy) \in E_R$ and the mutant graph is $K_n$, the fixation probability is lower bounded by $\left[\frac{1-\left(\frac{c}{r}\right)^{g(n)}}{1-\frac{c}{r}}\left(1 + o(1) \right) + \frac{\left(\frac{2c}{r}\right)^{g(n)}-\left(\frac{2c}{r}\right)^{n}}{1-\frac{2c}{r}}\right]^{-1}$, for $r>0$ and any function $g(n) \in \omega(1) \cap o(n)$. In particular, for $r>2c$ the lower bound tends to $1 - \frac{c}{r}$ as $n \to \infty$.
	\end{thm}
	
	\begin{proof}
		Notice that, given the number of mutants at a time-step is $i:=|S|$, the probability that some resident becomes mutant is $p_{i}^{i+1} = \frac{ir}{ir+n-i} \cdot \frac{n-i}{n-1}$, and the probability that some mutant becomes resident $p_{i}^{i-1}$ is upper bounded by $\frac{\min\{i,n-i\}}{ir+n-i} \max\limits_{(xy) \in E_R} \frac{w_{xy}^R}{w_{yx}^R}$. The following reasoning explains this upper bound. 
		
		First, we remind the definition of $p_{i}^{i-1}$ from equation \eqref{fixation1}, i.e. the probability of the middle term of the right-hand side of the equation: $p_{i}^{i-1} = \frac{1}{ir + n-i} \sum\limits_{x \notin S, y \in S} w_{xy}^{R}$. The sum in the latter formula is the sum of weights that the mutant nodes $y$ receive from resident nodes $x$. We will upper bound this sum of weights that go towards all mutants from residents. Each mutant node $y$ receives weight from at most $out-degree^{R}(y) = 1/w_{yx}^R$ residents (by definition of $w_{yx}^R$ for any $x$). The weight that comes from resident $x$ equals $w_{xy}^R$. Therefore, in total, the maximum weight that $y$ receives is $c:= \max\limits_{(xy) \in E_R}\frac{w_{xy}^R}{w_{yx}^R}$, where $x$ is the maximum weight (or equivalently, lowest degree) resident node that is neighbour of $y$ in $G_R$. 
		It remains to sum over all mutants $y$. For this we use a symmetry trick. Observe that in the way we have bounded the total incoming weight of a mutant from its resident neighbours, only the weights in the resident graph $G_R$ matter. There is a flow of weights from residents to mutants, but since this flow considers only the resident edges for both residents and mutants, by exchanging the roles of mutants and residents for the nodes, the flow should remain the exact same. To see this, note that between mutant nodes $z,w$, mutant-to-mutant edges (in $E_R$) will become resident-to-resident edges and vice versa, and obviously, the sum of $w_{zw}^R + w_{wz}^R$ will remain the same. This will happen for all pairs of nodes of the same type, and since the sum of all weights of all $n$ nodes equals $n$, the sum of weights $w_{xy}^R + w_{yx}^R$ of nodes $x$ and $y$ of opposite type will remain the same. Therefore, we can sum over all mutants $y$, or we can exchange the roles of mutants and residents and then sum over all mutants, and both these upper bounds are fine. Therefore, the minimum upper bound can be used, which is $\min\{i,n-i\}$. 
		
		Observe now that Lemma \ref{lem: domination} applies and our given process $P$ of an arbitrary undirected graph vs Clique stochastically dominates a birth-death process $Q$ that is described by the following Markov chain: A state $i$, where $i \in \{0,1,2,...,n\}$ is the number of mutants on the vertex set and the only absorbing states are $0$ and $n$. In particular, $f(S;P) \geq f_{\max}(S;P) = f(S;Q)$ for any given mutant set $S \subseteq V$. The aforementioned transition probabilities of our Markov chain allow us to bound the backward bias defined in Fact \ref{remark2}:
		\[
		\gamma_k \leq \begin{cases}
		\frac{c}{r} \cdot \frac{n-1}{n-k}, & \text{for }  k \in \{1,2,..,\left\floor{\frac{n}{2}\right}\}\\
		\frac{c}{r} \cdot \frac{n-1}{k}, & \text{for }  k \in \{\left\floor{\frac{n}{2}\right}+1,..,n-1\}
		\end{cases}
		\]
		Now we can calculate a lower bound on the fixation probability of $Q$ using Fact \ref{remark2} for $i=1$, the fact that $\frac{n-1}{n-a} = 1 + \frac{a-1}{n-a}$, for all $a \neq n $, and by considering an arbitrary function $g(n) \in \omega(1) \cap o(n)$ we have:
		\begin{align*}
		f(Q) &= \frac{1}{\left[\sum_{j=0}^{g(n) - 1}\left(\frac{c}{r}\right)^j\right]\left( 1 + o(1) \right) + \sum_{j=g(n)}^{n-1} \prod_{k=1}^{j} \gamma_k } \nonumber   \\
		& \geq \frac{1}{\frac{1-\left(\frac{c}{r}\right)^{g(n)}}{1-\frac{c}{r}}\left( 1 + o(1) \right) + \left(\frac{2c}{r}\right)^{g(n)}\sum_{j=0}^{n - g(n) - 1}\left(\frac{2c}{r}\right)^j} \quad \left(\text{since } \gamma_k \leq \frac{2c}{r}\right) \nonumber  \\
		&= \frac{1}{\frac{1-\left(\frac{c}{r}\right)^{g(n)}}{1-\frac{c}{r}} \left(1 + o(1) \right) + \left(\frac{2c}{r}\right)^{g(n)}\frac{1-\left(\frac{2c}{r}\right)^{n-g(n)}}{1-\frac{2c}{r}}}. 
		\end{align*}  
	\end{proof}

	From the theorem above it follows that if $G_R$ is an undirected regular graph then the fixation probability of $G_R$ vs $K_n$ is lower bounded by $1-1/r$ for $r>2$ and $n \to \infty$, which equals $f_\text{Moran}$ (defined in Section \ref{definitions}). 
	
	Also, by Lemma \ref{mirror} and the above theorem, when $G_R = K_n$, $G_M$ is an undirected graph with $w_{xy}^M/w_{yx}^M \leq c$ for every $(xy) \in E_M$, and relative fitness $r<\frac{1}{2c}$, then the upper bound of the fixation probability tends to $cr$ as $n \to \infty$. In particular, if $G_M$ is an undirected regular graph, then
	the fixation probability is upper bounded by $r$ for $r < \frac{1}{2}$ and $n \to \infty$.

	\subsection{Circulant Graphs vs Clique} \label{classesvsCL}
	
	In this subsection we give bounds for the fixation probability of $CId$ vs $K_n$. We first prove the following result that gives an upper bound on the fixation probability when $G_R$ is the $CId$ graph as described in Section \ref{definitions} and $G_M$ is the complete graph on $n$ nodes.

	\begin{thm}\label{lemma7}
		Let $d$ be the degree of the graph $CId$, and let it be a function of $n$. When $G_R = CId$ and $G_M = K_n$: 
		\begin{itemize}
			\item If $d = d(n) \in \Theta(1)$, then the fixation probability is upper bounded by $e^{-1/r}$ for $r \geq \frac{1}{\sqrt{n}}$.
			\item If $d = d(n) \in \omega(1)$, then the fixation probability is upper bounded by $\left( 1 - \frac{1 - o(1)}{r}\right) \left( 1 - \left( \frac{1 - o(1)}{r} \right)^{g(n)+2} \right)^{-1}$, for $r>0$ and any function $g(n) \in \omega (1) \cap o(d)$ such that $g(n) \leq d/2$ for the given $n$. In particular, for $r>1$ the bound tends to $1-\frac{1}{r}$ as $n \to \infty$.
		\end{itemize}
	\end{thm}
	
	\begin{proof}
		We will bound from above the fixation probability of our process $P$, by giving an upper bound on the fixation probability of a process $Q$ that dominates $P$ ($f(P) \leq f(Q)$). In $Q$, given that the number of mutants is $i$ at some step, the positioning of the mutants is such that minimizes the transition probability to a state with $i-1$ mutants. Then, by Lemma \ref{lem: domination} the required domination follows.
		
		Here is $Q$: Have the $CId$ graph for the residents, as defined in Section \ref{definitions}, and the clique graph for the mutants. Let us label the nodes of $CId$ as shown in Figure \ref{CIdvsCL}. The process starts in a way identical to that of $P$, i.e. with a single mutant on a node (without loss of generality we give it label $1$) chosen uniformly at random from the vertex set. Throughout the process, if a resident is selected to reproduce onto a resident, or a mutant is selected to reproduce onto a mutant, it reproduces according to the exact same rules of $P$. However, if a mutant is selected to reproduce onto a resident, instead of reproducing on it, it reproduces on an arbitrary resident that is adjacent to the maximum number of mutants possible in the current state. If a resident is selected to reproduce onto a mutant when the number of mutants is $i \in \{1,2,...,n-1\}$, then the last among the $i$ mutants that was inserted becomes resident, thus preserving the minimality of the probability of the residents to hit the mutants (see Figure \ref{CIdvsCL}). Without loss of generality, $Q$ is the process where at any step with mutant set $S = \{ 1, 2, \dots, |S| \}$: the first mutant is at node 1, the most recently inserted mutant occupies node $|S|$, and the new mutant occupies node $|S|+1$. 
		
		\begin{figure}[t]
			\begin{center}
				\includegraphics[scale=0.45]{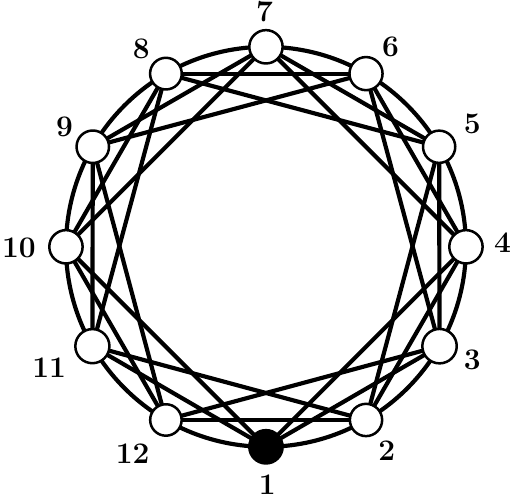}
				\includegraphics[scale=0.45]{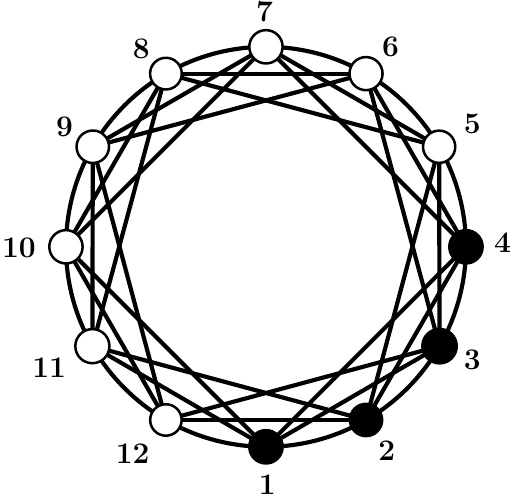}
				\includegraphics[scale=0.45]{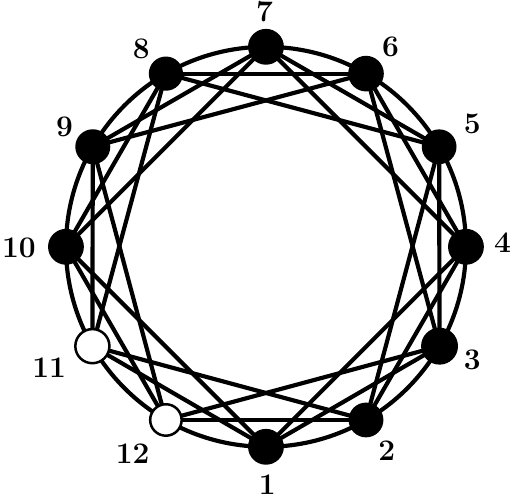}
				\caption{Three instances of process $Q$. Only the resident graph $G_R = CId$ is shown. White nodes are residents and black nodes are mutants. Here $d=6$ and $n=12$.} \label{CIdvsCL}
			\end{center}
		\end{figure}
		
		The weight of every resident edge is $1/d$, therefore, in a state with $i$ mutants, in order to minimize the sum of resident edges incident to a resident and a mutant it suffices to show that the number of such edges, is minimized (over all allocations of fixed $i$ mutants) by $Q$: this is the case since by inserting one by one the mutants, the closest the inserted mutant is to the lastly inserted mutant, the most mutant-to-mutant edges it creates and the less resident-to-resident edges it destroys, and therefore, the more mutant-to-resident edges it yields. Finally, by definition of process $Q$, at any time $t \geq 0$ the mutant-to-resident edges are at most as many as they are in $P$. In fact, $Q$ is the modified process $P$, in which, for each set of states $S_k$ we have kept only the state $S_{\min}^{(k)} \in S_k$ where mutants are placed on nodes $1,2,\dots, k$. This state minimizes the cumulative backward bias, since $S_{\min}^{(k)} = \arg \min_{X \in S_{k}} \sum_{j \in S_{k-1}} p_{X}^{j}$ and for every $X \in S_k$ we have $\sum_{j \in S_{k+1}} p_{S_{\min}^{(k)}}^{j} = \sum_{j \in S_{k+1}} p_{X}^{j}$. Then, $f(S;P) \leq f_{\min}(S;P) = f(S;Q)$ for any initial state $S \subseteq V$, where the inequality comes from Lemma \ref{lem: domination}.
		
		What remains to upper bound the fixation probability of $Q$. Due to the symmetry that our process $Q$ brings on the population instances, the corresponding Markov chain has $n+1$ states, as every state with the same number of mutants can be reduced to a single one. A state $i$, where $i \in \{0,1,2,...,n\}$ is the number of mutants and the only absorbing states are $0$ and $n$. After careful calculations we get that, for a state $i$, where $i \in \{1,2,...,n-1\}$, the probability of going to state $i-1$ in the next step is:
		\begin{align}\label{eq:backw_prob}
		p_{i}^{i-1} =\begin{cases}
		\frac{1}{ir+n-i} \cdot i \left( 1 - \frac{i-1}{d} \right), & \text{for }  i \in \{1,2,..,\frac{d}{2}+1\}\\
		\frac{1}{ir+n-i} \cdot \frac{1}{2} \cdot \left( \frac{d}{2}+1 \right), & \text{for }  i \in \{\frac{d}{2}+2,..,n-\frac{d}{2}\}\\
		\frac{1}{ir+n-i} \cdot (n-i) \cdot (1-\frac{n-i-1}{d}), & \text{for }  i \in \{ n-\frac{d}{2}+1,..,n-1\},
		\end{cases}
		\end{align}
		where the last formula is due to the fact that the residents send out $(n-1)d$ edges, of which $(n-i)(n-i-1)$ are to other residents, therefore in this case $p_i^{i-1} = \frac{1}{ir+n-i}\left(\frac{(n-i)d}{d}-\frac{(n-i)(n-i-1)}{d}\right)$.
		The probability of going to state $i+1$ in the next step is:
		\begin{align*}
		p_{i}^{i+1} = \frac{ir}{ir+n-i} \cdot \frac{n-i}{n-1} .
		\end{align*}
		and the probability of remaining in state $i$ in the next step is: $p_{i}^{i} = 1-p_{i}^{i-1}-p_{i}^{i+1}$. 
		
		If \pmb{$d \leq g(n)$} for some arbitrary fixed function $g(n) \in \Theta(1)$, we bound from above 
		the fixation probability $f(Q)$ given that we start from state $1$. To this end we use Fact \ref{remark2} for $i=1$, and the general lower bound on $\gamma_k$ yielded by the fact that the right-hand sides of \eqref{eq:backw_prob} are at least $(ir + n - i)^{-1}$:
		\begin{align*}
		\gamma_k \geq \frac{1}{r} \cdot \frac{n-1}{k(n-k)}  \quad \text{, for } 1 \leq k \leq n-1 .
		\end{align*}
		We have
		\begin{align*}
		f(Q) &\leq \frac{1}{1 + \frac{1}{r}\frac{n-1}{n-1} + \frac{1}{r^2}\frac{(n-1)^2}{2(n-1)(n-2)} + \frac{1}{r^3}\frac{(n-1)^3}{3!(n-1)(n-2)(n-3)} + \cdots + \frac{1}{r^{(n-1)}}\frac{(n-1)^{(n-1)}}{[(n-1)!]^2}}  \nonumber   \\
		& \leq \frac{1}{1 + \frac{1}{r} + \frac{1}{r^2}\frac{1}{2} + \frac{1}{r^3}\frac{1}{3!} + \cdots + \frac{1}{r^{(n-1)}}\frac{1}{(n-1)!}}  \nonumber   \\
		& = \frac{1}{e^{\frac{1}{r}} - \frac{1}{n!} e^{\xi} \frac{1}{r^{n}}}, \quad \text{ for some $\xi \in (0, \frac{1}{r})$,}  \nonumber  \\
		&= e^{-\frac{1}{r}} + o(1) , \quad \text{as $n \rightarrow \infty$}, \quad \text{ for $r \geq \frac{1}{\sqrt{n}}$}  \quad \text{(by Stirling's formula),}
		\end{align*}
		where the second to last equation comes from the expression of Maclaurin series of $e^{1/r}$.
		
		For the case \pmb{$d = d(n) \in \omega(1)$} we will employ an arbitrary fixed function $g(n) \in \omega(1) \cap o(d)$, such that additionally $g(n) \leq d/2$ for the given $n$. Again we give an upper bound on $f(Q)$. Now, it seems that the probability to reach state $d/2 + 1$ given that we start from state $1$ is close enough to the fixation probability, hence we bound $f(Q)$ by stopping at the former state and considering that we have fixation. Since we are considering this case, we use the more precise bound on $\gamma_k$ yielded by the top formula of \eqref{eq:backw_prob}:
		\begin{align*}
			\gamma_k \geq \frac{1}{r} \left( 1 - \frac{k-1}{d}\right) .
		\end{align*}
		
		We have
		\begin{align*}
		f(Q) & \leq \frac{1}{1+\sum_{j=1}^{d/2+1}\prod_{k=1}^{j}\gamma_k} \\
		& \leq \frac{1}{1 + \sum_{j=1}^{d/2+1}\prod_{k=1}^{j} \frac{1}{r}\left( 1 - \frac{k-1}{d} \right)} \nonumber   \\
		&\leq \frac{1}{1+\sum_{j=1}^{g(n)+1}\prod_{k=1}^{j} \frac{1}{r}\left( 1 - \frac{k-1}{d} \right)} \quad \text{ (by definition of $g(n)$) }  \nonumber  \\
		&\leq \frac{1}{1+\sum_{j=1}^{g(n)+1} \frac{1}{r^j}\left( 1 - \frac{g(n)}{d} \right)^j}  \nonumber  \\
		&= \frac{1 - \frac{1 - \frac{g(n)}{d}}{r}}{1 - \left( \frac{1 - \frac{g(n)}{d}}{r} \right)^{g(n)+2}}   \nonumber  \\
		&= \frac{1 - \frac{1 - o(1)}{r}}{1 - \left( \frac{1 - o(1)}{r} \right)^{g(n)+2}} , \quad \text{as $n \rightarrow \infty$}, \quad \text{for $r>0$}. \nonumber 
		\end{align*}
		Recall now that for the initial Moran process $P$ it holds that $f(P) \leq f(Q)$. This completes the proof of Theorem \ref{lemma7}.  
	\end{proof}

	Combining the latter result with the lower bound on the fixation probability from Theorem \ref{LB}, we get the following corollary.

	\begin{corollary}
		Let mutants have the $K_n$ graph and residents have the $CId$ graph with degree $d$ which is a function of $n$. Then, for constant $r>2$ and $n \to \infty$, the fixation probability tends to 
		\begin{itemize}
			\item a value between $1 - \frac{1}{r}$ and $e^{-\frac{1}{r}}$ when $d=d(n) \in \Theta(1)$,
			\item $1 - \frac{1}{r}$ when $d=d(n) \in \omega(1)$.
		\end{itemize}
	\end{corollary}

	\section{An Approximation Algorithm}\label{sec:fpras}
	
	Here we present a fully polynomial randomized approximation scheme (FPRAS)\footnote{An FPRAS for a function $f$ that maps problem instances to numbers is a randomized algorithm with input $I$ and parameter $\epsilon > 0$, which has running time polynomial in $|I|$ and $\epsilon ^{-1}$ and outputs a random variable $g$, such that $\Pr\{(1-\epsilon)f(I) \leq g(I) \leq (1+\epsilon)f(I)\} \geq \frac{3}{4}$ \cite{KL83}.} for the problem \textsc{UndirectedVsClique} of computing the fixation probability in the Moran process when the residents have an undirected graph and the mutants have the clique graph with $r>2c \left(1+\frac{1}{n-2}\right)$, where $c$ is the maximum ratio of the degrees of adjacent nodes in the resident graph. The following result is essential for the design of a FPRAS; it gives an upper bound (which depends on $c, r$ and is polynomial in $n$) on the expected absorption time of the Moran process in this case. 
	\begin{thm}\label{tau_upbound}
		Let $G_{R} = (V,E_R)$ be an undirected graph, for which $w_{xy}^R/w_{yx}^R \leq c$ for every $(xy) \in E_R$ and let $G_M = (V,E_M)$ be the clique graph, where $|V| = n \geq 3$. For $r \geq 2c\left( 1 + \frac{1}{n-2} \right)$ and any $S \subseteq V$, the absorption time $\tau$ of the ``$G_R$ vs $G_M$" Moran process $(X(t))_{t \geq 0}$ satisfies:
		\begin{align*}
		\mathbb{E}[\tau | X(0) = S] \leq \frac{r}{r-c} n^2. 
		\end{align*}
	\end{thm}
	
	\begin{proof}
		For any possible mutant set $S \subseteq V$, consider the number of mutants $|S|$. We first prove the following intermediate result which states that the number of mutants strictly increases in expectation when $r \geq 2c\left( 1 + \frac{1}{n-2} \right)$, where $c:= \max\limits_{(xy) \in E_R}\left(w_{xy}^R/w_{yx}^R\right)$, i.e. the maximum ratio of the degrees of adjacent nodes in the resident graph.
		
		\begin{lem}\label{exp_bound}
			Let $(X(t))_{t \geq 0}$ be a Moran process in which the resident graph is a general undirected graph and the mutant graph is the clique. If $r \geq 2c\left( 1 + \frac{1}{n-2} \right)$, where $c:= \max\limits_{(xy) \in E_R}\left(w_{xy}^R/w_{yx}^R\right)$, then
			\begin{align*}
			\mathbb{E}\left[|X(t+1)| - |X(t)| \text{ } | \text{ } X(t) = S \right] \geq \left(1 - \frac{c}{r}\right)\frac{1}{n},
			\end{align*}
		where $\emptyset \subset S \subset V$ is the mutant set.
		\end{lem}

		\begin{proof}
			By $F(S)$ we denote the total fitness $|S|r + |V| - |S|$ of a state with mutant set $S$. For $r \geq 2c\left( 1 + \frac{1}{n-2} \right)$ and $\emptyset \subset S \subset V$ we have:
			\begin{align*}
			& \mathbb{E}\left[|X(t+1)| - |X(t)| \text{ } | \text{ } X(t) = S \right] = \\
			& = \frac{1}{F(S)} \left[ \sum_{\substack{xy \in E_M \\ x \in S, y \in V \setminus S}} r \cdot w_{xy}^{M} \cdot \left(|S+y| - |S|\right) + \sum_{\substack{yx \in E_R \\ x \in S, y \in V \setminus S}} w_{yx}^{R} \cdot \left(|S-x| - |S|\right) \right] \\
			& = \frac{1}{F(S)} \left[ \sum_{\substack{xy \in E_M \\ x \in S, y \in V \setminus S}} r \cdot w_{xy}^{M}  - \sum_{\substack{yx \in E_R \\ x \in S, y \in V \setminus S}} w_{yx}^{R} \right] \\
			& \geq \frac{1}{F(S)} \left[ \frac{r}{n-1} \cdot |S|(n-|S|)  - c \cdot \min\{|S|,n-|S|\} \right],
			\end{align*}
			where we used the fact that $w_{xy}^{M} = \frac{1}{n-1}$ for every $(xy) \in E_M$, and \\ $\sum\limits_{\substack{yx \in E_R \\ x \in S, y \in V \setminus S}} w_{yx}^{R} \leq c \cdot \min\{|S|,n-|S|\}$ as in the proof of Theorem \ref{LB}. 
			
			The maximum value of $F(S)$ is $rn$, and the minimum of the function in the last brackets is $r-c$ for $r \geq 2c\left( 1 + \frac{1}{n-2} \right)$. The latter is shown by the following claim. 
			
			\begin{claim}
				Let $n \geq 3$, $c \geq 1$, and $r \geq 2c\left( 1 + \frac{1}{n-2} \right)$. Then the function $f(m) = \frac{r}{n-1} \cdot m(n-m)  - c \cdot \min\{m,n-m\}$, for $m \in \{1,2, \dots, n-1\}$ has global minimum $r-c$ attained at $m=1$.
			\end{claim}
			
			\begin{proof}
				We will transfer $f$ to the real domain $[1, n-1]$ in order to derive its behaviour in terms of global minima, and then get the solution as required in the discrete domain. $f(m)$ is clearly symmetric about $m = n/2$, and therefore it suffices to find its global minimum when only the domain $m \in [1,n/2]$ is considered. We have, $f(m)= \frac{r}{n-1} \cdot m(n-m)  - c \cdot m$. Its second derivative is $-\frac{2r}{n-1} < 0$ therefore its global minimum is attained at $m=1$ or $m=n/2$. It holds that $r \geq 2c\left( 1 + \frac{1}{n-2} \right)$ if and only if $f(1) \leq f(n/2)$, since $f(1)=r-c$ and $f(n/2)= r \frac{n^2}{4(n-1)} - c \frac{n}{2}$.  
			\end{proof}

			This completes the proof of Lemma \ref{exp_bound}. 
		\end{proof}
		
		The expected absorption time can be bounded using martingale techniques. In particular, we employ the following theorem from \cite{D14}, which was used to bound the absorption time of the process with a single undirected graph.
		
		\begin{thm}[Theorem 6, \cite{D14}] \label{tau_bound}
			Let $(Y(t))_{t \geq 0}$ be a Markov chain with state space $\Omega$, where $Y(0)$ is chosen from some set $I \subseteq \Omega$. If there are constants $k_1, k_2 > 0$ and a non-negative function $\psi : \Omega \to \mathbb{R}$ such that:
			\begin{itemize}
				\item $\psi(S) = 0$ for some $S \in \Omega$,
				\item $\psi(S) \leq k_1$ for all $S \in I$, and
				\item $\mathbb{E}[\psi(Y(t))-\psi(Y(t+1)) | Y(t) = S] \geq k_2$ for all $t \geq 0$ and all $S$ with $\psi(S) > 0$,
			\end{itemize}
			then $\mathbb{E}[\tau] \leq k_{1}/k_{2}$, where $\tau = \min\{t : \psi(Y(t)) = 0\}$.
		\end{thm}

		We can now prove Theorem \ref{tau_upbound} in a similar way as Theorem 7 of \cite{D14}: Let $(Y(t))_{t \geq 0}$ be the process that behaves identically to the Moran process $(X(t))_{t \geq 0}$ except that, if $Y(t) = \emptyset$ then $Y(t+1) = \{x\}$, where $x$ is a node chosen uniformly at random from $V$. Setting $\tau' = \min\{t : Y(t) = V \}$ as the absorption time of this process, we have $\mathbb{E}[\tau | X(0) = S] \leq \mathbb{E}[\tau' | Y(0) = S]$. By defining $\psi(S) = n - |S|$, $k_1 = n, k_2 = (1-c/r)/n$ we satisfy the first condition of the theorem for $S = V$, the second condition because $n - |S| \leq n$ for all $S \subseteq V$, and the third condition by Lemma \ref{exp_bound} since for all $t \geq 0$ and $S \subset V$,
		\begin{align*}
			\mathbb{E}[\psi(Y(t))-\psi(Y(t+1)) | Y(t) = S] &= \mathbb{E}[n - |Y(t)|- (n - |Y(t+1)|) | Y(t) = S]   \\
			&= \mathbb{E}[|Y(t+1)| - |Y(t)|| Y(t) = S]   \\
			&\geq \mathbb{E}[|X(t+1)| - |X(t)|| X(t) = S]  \\
			&\geq \left(1 - \frac{c}{r}\right)\frac{1}{n} ,
		\end{align*} 
		where the first inequality is strict only for $S = \emptyset$, by definition of process $(Y(t))_{t \geq 0}$.
		
		Finally, combining the above with Theorem \ref{tau_bound} we get $\mathbb{E}[\tau | X(0) = S] \leq \mathbb{E}[\tau' | Y(0) = S] \leq k_{1}/k_{2} = \frac{r}{r-c} n^2$.   
	\end{proof}

	For our algorithm to run in time polynomial in the length of the input, $r$ must be encoded in unary. 
	
	\begin{thm} \label{thm11}
		There is an FPRAS for \textsc{UndirectedVsClique}, for $r>2c \left(1+\frac{1}{n-2}\right)$.
	\end{thm}
	
	\begin{proof}
		We present the following algorithm. First, we find the constant $c$ by checking every edge of the resident graph and exhaustively finding the maximum ratio of adjacent nodes' degrees in $O(n^3)$ time. If and only if $r$ is greater than $2c \left(1+\frac{1}{n-2}\right)$, we simulate the Moran process where residents have the given undirected graph and mutants have the clique graph. We compute the proportion of simulations that reached fixation for $N=\left\ceil{2 \epsilon^{-2} \ln 16\right}$ simulation runs with maximum number $T=\left\ceil{e \frac{r}{r-c} n^2 \ln (8N) \right}$ of steps each. In case the simulation has not reached absorption by the $T$-th step, it stops and returns an error value.
		
		Also, each transition of the Moran process can be simulated in $O(1)$ time. This is possible if we keep track of the resident and mutant nodes in an array, thus choose the reproducing node in constant time. Further, we can pick the offspring node in constant time by running a breadth-first search for each graph before the simulations start, storing the neighbours of each node for the possible node types (resident and mutant) in arrays. Hence the total running time is $O(n^3 + NT)$, which is polynomial in $n$ and $\epsilon^{-1}$ as required by the FPRAS definition.
		
		Now, we only have to show that the output of our algorithm computes the fixation probability to within a factor of $1\pm \epsilon$ with probability at least $3/4$. Essentially, the proof is the same as in \cite{D14} with modifications needed for our setting. For $t \in \{1,2,...,N\}$, let $Y(t)$ be the indicator variable, where $Y(t)=1$ if the $t$-th simulation of the Moran process reaches fixation and $Y(t)=0$ otherwise. We first calculate the bounds on the probability of producing an output of error $\epsilon$ in the event where all simulation runs reach absorption within $T$ steps. The output of our algorithm is then $g = \frac{1}{N}\sum_{t=1}^{N}Y(t)$ while the required function is the fixation probability $f$. Since $f$ is the mean value of $g$, by using Hoeffding's inequality \cite{H63} we get:
		\begin{align}\label{eq:err_bound}
		Pr\{|g-f|>\epsilon f \} \leq 2 e^{-2\epsilon^2 f^2 N} \leq 2 e^{-4 f^{2} \ln 16} < \frac{1}{8}, 
		\end{align}
		where the latter inequality is because $f \geq 1-c/r > 1/2$ due to Theorem \ref{LB}.

		Next, we will bound the probability of not reaching absorption within $T$ steps. For this, we will employ the following lemma from \cite{CIN17}, slightly modified so that instead of having a multiplicative factor of $2$ we have $e$. The term \textit{effective step} stands for a step of the process in which there is a node whose type changes.
		\begin{lem}[Lemma 4, \cite{CIN17}]
			Consider an upper bound $\ell$, for each mutant set $\emptyset \subset S \subset V$, on the expected number of (effective) steps to fixation. Then for any starting type function the probability of fixation requires more than $e \cdot \ell \cdot x$ (effective) steps is at most $e^{-x}$.
		\end{lem} 
		Using the upper bound of Theorem \ref{tau_upbound} as $\ell$, and by upper bounding $e^{-x}$ by $1/8N$ (or equivalently, lower bounding $x$ by $\ln (8N)$), the above lemma tells us that a single run of the process' simulation reaches absorption after $T = e \frac{r}{r-c} n^2 \ln (8N))$ steps with probability at most $1/8N$. By the union bound, the probability that any of the $N$ runs will reach absorption time after $T$ steps is upper bounded by $1/8$.	

		Putting everything together, by the union bound of the event where the absolute difference of $g$ and $f$ is bounded away by $\epsilon f$ and the event of returning an error value (when absorption time exceeds $T$), we have that the probability of an output $g$ that is not as required is at most $1/4$ (by the latter bound and \eqref{eq:err_bound}, respectively). Thus, the probability of producing an output $g$ as required, is at least $3/4$.  
	\end{proof}

	%
	%
	%

	\section{Conclusions and Open Problems}
	
	In this work we studied a generalization of the Moran process where the type (mutant or resident) of an agent determines, not only its fitness, but its neighbours as well. The latter seems to be a fundamental property of the evolutionary processes in cell structures, computer networks or social networks, but has never been considered before. 
	
	We gave some preliminary general results for the two-graphs Moran process and extended well known results from the typical (single-graph) Moran process \cite{L05}. Subsequently, we proposed a 2-player game theoretic view of the process by assuming that each type of individuals corresponds to a selfish player and then relating selfish costs with fixation and extinction probabilities. In this setting, we considered the extreme case of a mutant with maximum unpredictability, i.e. a mutant that can reproduce onto any other individual, and we examined the fixation probability that various resident structures yield against it. Our results seem to indicate that the clique is a dominant strategy for the mutant in the above game, however, a complete proof of that conjecture is needed and it is the main open problem stemming from this work. In fact, even the easier task of determining best responses for each player given specific strategies (graphs) from the other is a major open problem.
	
	Finally, we considered the problem of computing fixation probabilities, for arbitrary pairs of undirected graphs. We showed that there exists a pair of graph families that result in exponentially small fixation probability, a fact that excludes polynomial-time approximation via a method similar to \cite{D14} of the fixation probability in the general case. However, we showed explicitly how the method of \cite{D14} applies in the special case where the mutant graph is complete, by providing an FPRAS.
	
	For future work, an appealing problem to solve would be to find (if they exist) pure Nash equilibria in the aforementioned underlying game. This would imply that evolution through this process is drawn to specific pairs of graphs. Consequently, another interesting question would be: what is the time needed for the game to reach such a Nash equilibrium?

	\section*{Acknowledgements}
	Part of this work was done while the first author was affiliated with the University of Liverpool. The work of the first author was partially supported by the Alexander von Humboldt Foundation with funds from the German Federal Ministry of Education and Research (BMBF). The work of the third author was supported by the Greek State Scholarship Foundation. The work of the fourth author was partially supported by the ERC Project ALGAME. The authors would like to thank the anonymous referees for their detailed and constructive feedback.


	
	\bibliography{references}

\begin{thebibliography}{10}
\expandafter\ifx\csname url\endcsname\relax
  \def\url#1{\texttt{#1}}\fi
\expandafter\ifx\csname urlprefix\endcsname\relax\def\urlprefix{URL }\fi
\expandafter\ifx\csname href\endcsname\relax
  \def\href#1#2{#2} \def\path#1{#1}\fi

\bibitem{L05}
E.~Lieberman, C.~Hauert, M.~A. Nowak, Evolutionary dynamics on graphs, Nature
  433~(7023) (2005) 312--316.

\bibitem{MNRS18}
T.~Melissourgos, S.~E. Nikoletseas, C.~Raptopoulos, P.~G. Spirakis,
  \href{https://doi.org/10.1007/978-3-319-77404-6\_57}{Mutants and residents
  with different connection graphs in the {M}oran process}, in: {LATIN} 2018:
  Theoretical Informatics - 13th Latin American Symposium, Buenos Aires,
  Argentina, April 16-19, 2018, Proceedings, 2018, pp. 790--804.
\newblock \href {http://dx.doi.org/10.1007/978-3-319-77404-6\_57}
  {\path{doi:10.1007/978-3-319-77404-6\_57}}.
\newline\urlprefix\url{https://doi.org/10.1007/978-3-319-77404-6\_57}

\bibitem{D14}
J.~D{\'i}az, L.~A. Goldberg, G.~B. Mertzios, D.~Richerby, M.~Serna, P.~G.
  Spirakis, {Approximating fixation probabilities in the generalized Moran
  process}, Algorithmica 69~(1) (2014) 78--91.

\bibitem{M58}
P.~A.~P. Moran, Random processes in genetics, Mathematical Proceedings of the
  Cambridge Philosophical Society 54~(1) (1958) 60–71.

\bibitem{N06}
M.~A. Nowak, Evolutionary Dynamics: Exploring the Equations of Life, Harvard
  University Press, 2006.

\bibitem{HJSK}
J.~Hofbauer, K.~Sigmund, Evolutionary games and population dynamics, Cambridge
  University Press, 1998.

\bibitem{EDKJ}
E.~David, K.~Jon, Networks, Crowds, and Markets: Reasoning About a Highly
  Connected World, Cambridge University Press, New York, NY, USA, 2010.

\bibitem{IMN04}
Y.~Iwasa, F.~Michor, M.~A. Nowak,
  \href{https://www.genetics.org/content/166/3/1571}{Stochastic tunnels in
  evolutionary dynamics}, Genetics 166~(3) (2004) 1571--1579.
\newblock \href
  {http://arxiv.org/abs/https://www.genetics.org/content/166/3/1571.full.pdf}
  {\path{arXiv:https://www.genetics.org/content/166/3/1571.full.pdf}}, \href
  {http://dx.doi.org/10.1534/genetics.166.3.1571}
  {\path{doi:10.1534/genetics.166.3.1571}}.
\newline\urlprefix\url{https://www.genetics.org/content/166/3/1571}

\bibitem{Kom06}
N.~L. Komarova, \href{https://doi.org/10.1007/s11538-005-9046-8}{Spatial
  stochastic models for cancer initiation and progression}, Bulletin of
  Mathematical Biology 68~(7) (2006) 1573--1599.
\newblock \href {http://dx.doi.org/10.1007/s11538-005-9046-8}
  {\path{doi:10.1007/s11538-005-9046-8}}.
\newline\urlprefix\url{https://doi.org/10.1007/s11538-005-9046-8}

\bibitem{MIN04}
F.~Michor, Y.~Iwasa, M.~A. Nowak, Dynamics of cancer progression, Nature
  reviews cancer 4~(3) (2004) 197.

\bibitem{KLVN02}
N.~L. Komarova, C.~Lengauer, B.~Vogelstein, M.~A. Nowak,
  \href{https://doi.org/10.4161/cbt.321}{Dynamics of genetic instability in
  sporadic and familial colorectal cancer}, Cancer Biology \& Therapy 1~(6)
  (2002) 685--692, pMID: 12642695.
\newblock \href {http://arxiv.org/abs/https://doi.org/10.4161/cbt.321}
  {\path{arXiv:https://doi.org/10.4161/cbt.321}}, \href
  {http://dx.doi.org/10.4161/cbt.321} {\path{doi:10.4161/cbt.321}}.
\newline\urlprefix\url{https://doi.org/10.4161/cbt.321}

\bibitem{NMI03}
M.~A. Nowak, F.~Michor, Y.~Iwasa, The linear process of somatic evolution,
  Proceedings of the national academy of sciences 100~(25) (2003) 14966--14969.

\bibitem{GIA16}
G.~Giakkoupis, \href{http://arxiv.org/abs/1611.01585}{{Amplifiers and
  suppressors of selection for the Moran process on undirected graphs}}, CoRR
  abs/1611.01585.
\newline\urlprefix\url{http://arxiv.org/abs/1611.01585}

\bibitem{G16}
A.~Galanis, A.~G{\"{o}}bel, L.~A. Goldberg, J.~Lapinskas, D.~Richerby,
  Amplifiers for the {M}oran process, J. {ACM} 64~(1) (2017) 5:1--5:90.

\bibitem{GL16}
L.~A. {Goldberg}, J.~{Lapinskas}, J.~{Lengler}, F.~{Meier}, K.~{Panagiotou},
  P.~{Pfister}, {Asymptotically optimal amplifiers for the Moran process},
  ArXiv e-prints\href {http://arxiv.org/abs/1611.04209}
  {\path{arXiv:1611.04209}}.

\bibitem{M13}
G.~B. Mertzios, P.~G. Spirakis, Strong bounds for evolution in networks, J.
  Comput. Syst. Sci. 97 (2018) 60--82.

\bibitem{GLR18}
L.~Ann~Goldberg, J.~Lapinskas, D.~Richerby,
  \href{https://onlinelibrary.wiley.com/doi/abs/10.1002/rsa.20890}{Phase
  transitions of the moran process and algorithmic consequences}, Random
  Structures \& Algorithms n/a~(n/a).
\newblock \href
  {http://arxiv.org/abs/https://onlinelibrary.wiley.com/doi/pdf/10.1002/rsa.20890}
  {\path{arXiv:https://onlinelibrary.wiley.com/doi/pdf/10.1002/rsa.20890}},
  \href {http://dx.doi.org/10.1002/rsa.20890} {\path{doi:10.1002/rsa.20890}}.
\newline\urlprefix\url{https://onlinelibrary.wiley.com/doi/abs/10.1002/rsa.20890}

\bibitem{OPN}
H.~Ohtsuki, J.~M. Pacheco, M.~A. Nowak, Evolutionary graph theory: Breaking the
  symmetry between interaction and replacement, Journal of Theoretical Biology
  246~(4) (2007) 681--694.

\bibitem{IJCN}
R.~Ibsen-Jensen, K.~Chatterjee, M.~A. Nowak, Computational complexity of
  ecological and evolutionary spatial dynamics, Proceedings of the National
  Academy of Sciences 112~(51) (2015) 15636--15641.

\bibitem{CIN17}
K.~Chatterjee, R.~Ibsen{-}Jensen, M.~A. Nowak,
  \href{https://doi.org/10.4230/LIPIcs.MFCS.2017.61}{Faster {M}onte-{C}arlo
  algorithms for fixation probability of the {M}oran process on undirected
  graphs}, in: 42nd International Symposium on Mathematical Foundations of
  Computer Science, {MFCS} 2017, August 21-25, 2017 - Aalborg, Denmark, 2017,
  pp. 61:1--61:13.
\newblock \href {http://dx.doi.org/10.4230/LIPIcs.MFCS.2017.61}
  {\path{doi:10.4230/LIPIcs.MFCS.2017.61}}.
\newline\urlprefix\url{https://doi.org/10.4230/LIPIcs.MFCS.2017.61}

\bibitem{GS01}
G.~Grimmett, D.~Stirzaker, et~al., Probability and random processes, Oxford
  University Press, 2001.

\bibitem{N98}
J.~R. Norris, J.~R. Norris, J.~R. Norris, Markov chains, no.~2, Cambridge
  university press, 1998.

\bibitem{DGRS16}
J.~D{\'{\i}}az, L.~A. Goldberg, D.~Richerby, M.~J. Serna,
  \href{https://doi.org/10.1002/rsa.20617}{Absorption time of the moran
  process}, Random Struct. Algorithms 49~(1) (2016) 137--159.
\newblock \href {http://dx.doi.org/10.1002/rsa.20617}
  {\path{doi:10.1002/rsa.20617}}.
\newline\urlprefix\url{https://doi.org/10.1002/rsa.20617}

\bibitem{BR08}
M.~Broom, J.~Rycht{\'a}{\v r},
  \href{http://rspa.royalsocietypublishing.org/content/464/2098/2609}{{An
  analysis of the fixation probability of a mutant on special classes of
  non-directed graphs}}, Proceedings of the Royal Society of London A:
  Mathematical, Physical and Engineering Sciences 464~(2098) (2008) 2609--2627.
\newblock \href
  {http://arxiv.org/abs/http://rspa.royalsocietypublishing.org/content/464/2098/2609.full.pdf}
  {\path{arXiv:http://rspa.royalsocietypublishing.org/content/464/2098/2609.full.pdf}},
  \href {http://dx.doi.org/10.1098/rspa.2008.0058}
  {\path{doi:10.1098/rspa.2008.0058}}.
\newline\urlprefix\url{http://rspa.royalsocietypublishing.org/content/464/2098/2609}

\bibitem{KL83}
R.~M. Karp, M.~Luby, Monte-{C}arlo algorithms for enumeration and reliability
  problems, in: Foundations of Computer Science, 1983., 24th Annual Symposium
  on, IEEE, 1983, pp. 56--64.

\bibitem{H63}
W.~Hoeffding, Probability inequalities for sums of bounded random variables,
  Journal of the American {S}tatistical {A}ssociation 58~(301) (1963) 13--30.

\end{thebibliography}
	
\end{document}